\documentclass[11pt,a4paper]{article}
\usepackage{xcolor}
\usepackage{a4wide}
\usepackage[utf8]{inputenc}
\usepackage[T1]{fontenc}
\usepackage{amssymb}
\usepackage{amsfonts}
\usepackage[fleqn]{amsmath}

\usepackage{amsthm}
\usepackage{mathabx}
\usepackage{stmaryrd}
\usepackage{xspace}
\usepackage[shortlabels]{enumitem}
\setenumerate{itemsep=0.005ex}
\bibliographystyle{plain}

\hyphenpenalty=1000 
\exhyphenpenalty=1000
\binoppenalty=10000 
\relpenalty=10000
\allowdisplaybreaks

\newtheorem{theorem}{Theorem}[section]
\newtheorem{lemma}[theorem]{Lemma}
\newtheorem{corollary}[theorem]{Corollary}

\theoremstyle{definition}
\newtheorem{definition}[theorem]{Definition}

\theoremstyle{remark}

\newcommand{\measure}[1]{\mu\!\left( #1 \right)}

\newcommand{\ceil}[1]{\lceil #1 \rceil }

\newcommand{\tuple}[1]{\langle #1 \rangle}
\newcommand{\fRef}[0]{\text{\sc Ref}}
\newcommand{\lRef}[0]{\text{\sc LimRef}}
\newcommand{\N}[0]{\mathbb{N}}

\newcommand{\bSigma}{\mathbf \Sigma}
\newcommand{\bPi}{\mathbf \Pi}

\renewcommand{\epsilon}{\varepsilon}

\newcommand{\pt}[0]{\Pi^0_3}
\newcommand{\bfPi}[0]{\mathbf \Pi}
\newcommand{\prodd}{{\scriptstyle \prod}}

\DeclareMathOperator{\occ}{occ}

\title{Normal Numbers and  the Borel Hierarchy}

\author{
  \begin{tabular}{ccc}
Ver\'onica Becher& Pablo Ariel Heiber& Theodore A.~Slaman\\
\small Universidad de Buenos Aires & \small Universidad de Buenos Aires  &\small University of California, Berkeley
\\
\small vbecher@dc.uba.ar&\small  pheiber@dc.uba.ar&\small  slaman@math.berkeley.edu
  \end{tabular}
}

\begin{document}
\date{November 1, 2013}
\maketitle

\begin{abstract}
  We show that the set of absolutely normal numbers is $\bPi^0_3$-complete in the Borel hierarchy of
  subsets of real numbers.  Similarly, the set of absolutely normal numbers is $\Pi^0_3$-complete in
  the effective Borel hierarchy.
\end{abstract}
\medskip

\section{Introduction}

What is the descriptive complexity of the set of absolutely normal numbers?  Alexander Kechris posed
this question in the early 1990s when he asked whether the set of real numbers which are normal to
base two is $\bfPi^0_3$-complete in the Borel hierarchy.  Ki and Linton~\cite{KiLin94} proved that,
indeed, the set of numbers that are normal to any fixed base is $\bfPi^0_3$-complete.  However,
their proof technique does not extend to the case of absolute normality, that is, normality to all
bases simultaneously.

We show that the set of absolutely normal numbers  is also $\bPi^0_3$-complete.
In fact, it is $\Pi^0_3$-complete in the effective Borel hierarchy.
We  give, explicitly, a reduction that proves the two  completeness results.
By a feature of this reduction we also provide an alternate proof of  Ki and Linton's theorem.
Our analysis here is a refinement  of our algorithm for computing absolutely normal numbers in~\cite{BecHeiSla1301}.

\section{Preliminaries}

\noindent
{\bf Notation.} 
As usual $\N$ is  the set of positive integers. 
A \emph{base} is an integer $b$ greater than or equal to $2$, 
a {\em digit} in base~$b$ is an element of $\{0,\dots,b-1\}$, 
and a {\em block} in base~$b$ is a finite sequence of digits in base~$b$.  
The length of a block $x$ is $|x|$.
We denote the set of blocks in base $b$ of length $\ell$ by
$\{0,\dots,b-1\}^\ell$.
We write the concatenation of two blocks $x$ and $u$ as  $xu$.
For  arbitrary many blocks $u_i$, for $i:1,2,\ldots,m$,  $\prodd_{1\leq i\leq m}u_i$, is their   concatenation in increasing order of~$i$.
Along the sequel, 
when the starting value for  the index is  $1$, 
we abbreviate this expression 
as $\prodd_{ i\leq m}u_i$.

In case $x$ is a finite or an infinite sequence of digits,
$x \restriction i$ is the subblock of the first $i$ digits of $x$
and  $x[i]$ is the $i$th digit of $x$.
 A digit~$d$ {\em occurs} in~$x$ at position~$i$ if $x[i]=d$.  
 A block~$u$ {\em occurs} in~$x$ at position~$i$ if $x[i+j-1]=u[j]$ for
 $j=1..|u|$.
The number of occurrences of the block $u$ in the block $x$ 
is $\occ(x, u)=\#\{ i: u \mbox{ occurs in } x \mbox{ at position } i \}.$

For each real number $R$ in the unit interval we consider its  unique expansion in base~$b$ 
$R = \sum_{i=1}^{\infty} a_i b^{-i}$, where the $a_i$ are digits in base~$b$, and $a_i<b-1$ infinitely many times.
This last condition over $a_i$ ensures a unique representation for every rational number.
We write $(R)_b$ to denote the expansion of a real $R$ in base~$b$ given by the  sequence $(a_i)_{i\geq 1}$.

\subsection{On normal numbers}

Among the several equivalent definitions of absolute normality the following is the most convenient for our presentation. For a reference see the books \cite{Bug12} or \cite{kuipers}.

\begin{definition} \label{2.1}
\begin{enumerate}
\item A real number~$R$ is \emph{simply normal to base~$b$} if for each digit~$d$ in base~$b$,
    $\lim_{n\to\infty}\occ((R)_b \restriction n, d)/n=1/b.$
\item $R$  is \emph{normal to base $b$}  if it is  simply normal to  the bases $b^\ell$, for every integer  $\ell\geq~1$. 

\item $R$ is \emph{absolutely normal} if it is normal to every base.

\item $R$ is \emph{absolutely abnormal} if it is not normal to any base.
\end{enumerate}
\end{definition}
Notice that absolute normality is equivalent to being simply normal to every base, but
absolute abnormality is not equivalent to being not simply normal to every integer base.

\begin{lemma}\label{2.2}
If a real number $R$ is simply normal to at most finitely many bases then $R$ is absolutely abnormal.
If a real number $R$ is normal to at most finitely many bases then $R$ is absolutely abnormal.
\end{lemma}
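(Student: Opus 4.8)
The plan is to prove both implications by contraposition, essentially by unwinding Definition~\ref{2.1}(2). For the first statement, suppose $R$ is \emph{not} absolutely abnormal, i.e.\ $R$ is normal to some base $b$. By the definition, $R$ is then simply normal to each of $b,b^2,b^3,\dots$, and since $b\ge 2$ these are pairwise distinct integers, so $R$ is simply normal to infinitely many bases. Contrapositively, a real that is simply normal to only finitely many bases cannot be normal to any base, i.e.\ it is absolutely abnormal.

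For the second statement, the only extra ingredient is the observation that normality to base $b$ entails normality to base $b^{k}$ for every $k\ge 1$. This is immediate from the adopted formulation: ``$R$ is normal to base $b^{k}$'' means ``$R$ is simply normal to $(b^{k})^{\ell}=b^{k\ell}$ for every $\ell\ge 1$'', and normality of $R$ to base $b$ already supplies simple normality to $b^{m}$ for \emph{every} $m\ge 1$, in particular for $m=k\ell$. Hence, if $R$ is normal to some base $b$, it is normal to each of the infinitely many pairwise distinct bases $b,b^2,b^3,\dots$, and the contrapositive again gives the claim. (One can also observe that the hypothesis of the first statement is strictly stronger than that of the second — being normal to base $b$ requires simple normality to infinitely many bases — so the first statement is in fact a consequence of the second.)

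I do not expect a genuine obstacle here. The single ``choice'' in the argument is to reason from the definition of normality given in Definition~\ref{2.1}, under which both implications collapse to the trivial remark that $\{\,k\ell : \ell\ge 1\,\}$ is an infinite subset of $\N$. Were one to start instead from a superficially different but equivalent characterisation of ``normal to base $b$'', the step ``normal to $b$ implies normal to $b^{k}$'' would require invoking the classical equivalence of normality to $b$ and to $b^{k}$; but only its easy direction is needed, and with the present definition even that is automatic.
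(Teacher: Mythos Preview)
Your proof is correct and follows essentially the same route as the paper: both argue by contraposition, using Definition~\ref{2.1}(2) to conclude that normality to a single base $b$ forces simple normality (and hence normality) to all of the infinitely many powers $b,b^2,b^3,\dots$. Your write-up is simply more explicit about why the $b^\ell$ are distinct and why normality to $b$ implies normality to each $b^k$, but the underlying argument is identical.
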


\begin{proof}
  Fix a base $b$. By contraposition, if $R$ were normal to base $b$, then it would be simply normal
  to all bases $b^\ell$, a contradiction to the hypothesis of the first claim. Simple normality to all
  bases $b^\ell$ implies by definition normality to all bases $b^\ell$, which contradicts the hypothesis
  of the second claim.
\end{proof}

The {\em simple discrepancy} of a block in base $b$ 
indicates the difference between the actual number of occurrences of the digits in that block  and their expected average.
The definition of normality can be given in terms of discrepancy, see \cite{Bug12}.

\begin{definition}\label{2.3}
  Let $u$ be a block of digits in base $b$.  
The \emph{simple discrepancy} of the block $u$ to base $b$ is
\[
D(u,b)= \max\{ |\occ(u, d)/|u|-1/b\,| : d\in \{0,\dots,b-1\} \}
\]
 Let $\ell$ be a positive integer. 
The  \emph{block discrepancy} of block $u$ to blocks of length $\ell$ in base $b$ is
\[
D_\ell(u,b)=\max\{|\occ(u, v)/|u|-1/b^\ell\,|: v \in \{0,\dots,b-1\}^\ell\},
\] 
\end{definition}
Notice that $D(u,b)$ is a number between $0$ and $1-1/b$, and $D_\ell(u,b)$ is a number between $0$ and $1-1/b^\ell$.
By the definition of simple discrepancy, a real number $R$ is simply normal to base $b$ if and only if
$\lim_{n\to\infty} D((R)_b \restriction n,b) = 0.$ 
Instead of asking for  simple discrepancy for every base $b^\ell$, $\ell\geq 1$,
it is possible to characterize normality using block discrepancy.

\begin{lemma}[Theorem 4.2 in \cite{Bug12}]\label{2.4}
A real number $R$ is normal to base $b$ if and only if for every $\ell\geq 1$, 
$\lim_{n\to\infty} D_\ell((R)_b \restriction n,b) = 0.$
\end{lemma}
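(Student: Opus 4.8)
This is an equivalence, and the plan is to restate both sides as statements about block frequencies and then connect them. Unwinding Definition~\ref{2.1}: $R$ is normal to base $b$ exactly when, for every $\ell\geq1$, each base-$b^{\ell}$ digit occurs in $(R)_{b^{\ell}}$ with limiting frequency $1/b^{\ell}$; and a base-$b^{\ell}$ digit is precisely a block of length $\ell$ in base $b$ read off at a position that is a multiple of $\ell$. So the left-hand side says: for every $\ell$, every length-$\ell$ block $v$ occurs with limiting frequency $b^{-\ell}$ when occurrences are counted only at the \emph{aligned} positions $1,\ell+1,2\ell+1,\dots$. On the other hand, since there are only finitely many length-$\ell$ blocks, $\lim_n D_\ell((R)_b\restriction n,b)=0$ for every $\ell$ says exactly: for every $\ell$, every length-$\ell$ block $v$ occurs with limiting frequency $b^{-\ell}$ when occurrences are counted at \emph{all} positions, i.e.\ in the sliding sense measured by $\occ$. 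So the lemma is the equivalence ``all aligned block frequencies correct'' $\iff$ ``all sliding block frequencies correct''. The exact combinatorial link between the two schemes is that $\occ((R)_b\restriction n,v)$ is the sum, over $r=0,\dots,\ell-1$, of the number of occurrences of $v$ in $(R)_b\restriction n$ at positions $\equiv r+1\pmod\ell$, since every position at which $v$ fits has exactly one residue modulo $\ell$.

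For ``normal $\Rightarrow$ discrepancies tend to $0$'', fix $\ell$ and a block $v$ with $|v|=\ell$, and fix a large multiple $L$ of $\ell$. Cut $(R)_b\restriction n$ into $\lfloor n/L\rfloor$ consecutive \emph{super-blocks} of length $L$ followed by a tail of length $<L$. Every occurrence of $v$ either lies inside a single super-block or straddles a boundary, and there are at most $\ell-1$ straddling positions per boundary, hence $O(n\ell/L)$ overall. The super-blocks are exactly the first $\lfloor n/L\rfloor$ digits of $(R)_{b^{L}}$, and $R$ normal to base $b$ yields simple normality to base $b^{L}$, so their empirical distribution converges to the uniform distribution on $\{0,\dots,b-1\}^{L}$; since $\occ(\,\cdot\,,v)$ is a bounded function on that finite set, the average of $\occ(B,v)$ over the super-blocks of $(R)_b\restriction n$ converges to the uniform average $(L-\ell+1)b^{-\ell}$. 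Thus $\liminf_n\occ((R)_b\restriction n,v)/n\ge(L-\ell+1)b^{-\ell}/L$ and $\limsup_n\occ((R)_b\restriction n,v)/n\le(L-\ell+1)b^{-\ell}/L+(\ell-1)/L$; letting $L\to\infty$ squeezes both to $b^{-\ell}$, so $\occ((R)_b\restriction n,v)/n\to b^{-\ell}$ uniformly over the finitely many $v$, and $D_\ell((R)_b\restriction n,b)\to0$.

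The reverse direction is the substantive one. Assume all sliding block frequencies of $(R)_b$ are correct; fix $m$ and aim to show $R$ is simply normal to base $b^{m}$, i.e.\ that every length-$m$ block $w$ has aligned-at-offset-$0$ frequency $b^{-m}$ (doing this for all $m$ gives normality to base $b$). I would argue dynamically. Write $x=(R)_b$, a point of $\{0,\dots,b-1\}^{\N}$. ``All sliding block frequencies correct'' says precisely that $x$ is generic for the uniform Bernoulli measure $\beta$ under the shift $T$ (the empirical measures of its $T$-orbit converge weak-$*$ to $\beta$), and the frequency we want is $\nu([w])$ for $\nu$ any weak-$*$ limit of the empirical measures $\tfrac1N\sum_{k<N}\delta_{T^{mk}x}$ of its $T^{m}$-orbit. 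The point is that such a $\nu$ must equal $\beta$: along the relevant subsequence the offset-$r$ averages $\tfrac1N\sum_{k<N}\delta_{T^{mk+r}x}$ converge to $T^{r}_{*}\nu$, and $T$-genericity forces $\tfrac1m\sum_{r=0}^{m-1}T^{r}_{*}\nu=\beta$; each $T^{r}_{*}\nu$ is $T^{m}$-invariant, and $\beta$ is $T^{m}$-ergodic (it is $T$-mixing, so $T^{m}$ is mixing, hence ergodic), so extremality of $\beta$ among $T^{m}$-invariant measures forces $T^{r}_{*}\nu=\beta$ for every $r$; in particular $\nu=\beta$. Hence every length-$m$ block $w$ has aligned-at-offset-$0$ frequency $\beta([w])=b^{-m}$, which is simple normality to base $b^{m}$.

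I expect the crux to be exactly this passage from $T$-genericity to $T^{m}$-genericity. Its content is that ``position modulo $m$'' is invisible in any bounded window, so correctness of one family of aligned frequencies cannot follow from any finite amount of sliding information; one really needs the rigidity of the \emph{simultaneous} correctness of \emph{all} sliding frequencies — equivalently, the $T^{m}$-ergodicity of $\beta$ — to decouple position-mod-$m$ from local content. Everything else (the super-block estimate, the uniform-average computation, the residue bookkeeping) is routine. An elementary version of the crux follows the same shape: track the $m$ offset-frequencies of $w$ as limit points $p_0,\dots,p_{m-1}$, use the identity of the first paragraph to get $\tfrac1m\sum_r p_r=b^{-m}$, and replay the extremality argument combinatorially; the measure-theoretic phrasing is just the cleanest way to organize it.
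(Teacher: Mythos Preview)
The paper does not prove this lemma; it is stated with attribution to Theorem~4.2 of \cite{Bug12} and used without argument, so there is no in-paper proof to compare against.

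Your argument is correct on both directions. The forward implication via super-blocks of length $L$ (a multiple of $\ell$), the uniform-average computation $(L-\ell+1)b^{-\ell}$, and the boundary count is standard and cleanly executed. For the converse, your ergodic route --- take any weak-$*$ subsequential limit $\nu$ of the $T^m$-empirical measures, write $\beta=\tfrac{1}{m}\sum_{r<m}T^r_*\nu$ from $T$-genericity, and use $T^m$-ergodicity (hence extremality) of the Bernoulli measure to force each $T^r_*\nu=\beta$ --- is valid; the only step left tacit is that compactness of the space of probability measures upgrades ``every subsequential limit equals $\beta$'' to convergence of the full sequence, which is routine.

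The classical elementary proofs in the literature (Pillai; Niven--Zuckerman; the treatment in Bugeaud's book that the paper cites) avoid measure theory entirely, bounding the aligned-frequency defect directly in terms of sliding frequencies of blocks whose length is a multiple of $m$. Your approach is a genuinely different packaging of the same phenomenon: it trades that combinatorics for the single structural fact that Bernoulli measure is mixing, which is conceptually cleaner but invokes machinery the paper does not otherwise use.
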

The next lemma bounds the discrepancy of a concatenations of blocks. 
We use it very often in the sequel, without making explicit reference to it.

\begin{lemma}
If $u_1,\dots,u_n$ are blocks of digits in base $b$,
\[
D\left(\prodd_{j\le n} u_j, b\right) \le 
\left. {\sum_{j=1}^n D(u_j,b) |u_j| } \middle/  { \sum_{h=1}^n |u_h| }. \right. 
\]
\end{lemma}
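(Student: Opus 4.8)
The plan is to reduce the estimate to the single‑digit case and to exploit the fact that an occurrence of a \emph{digit}, unlike an occurrence of a longer block, cannot straddle the boundary between two consecutive factors $u_j$ and $u_{j+1}$. Write $W=\prodd_{j\le n} u_j$ and $L=\sum_{h=1}^{n}|u_h|=|W|$, and assume $L>0$ (if all factors are empty there is nothing to prove). The first step is the exact identity
\[
\occ(W,d)=\sum_{j=1}^{n}\occ(u_j,d)\qquad\text{for every digit }d\text{ in base }b,
\]
which holds because every position of $W$ lies in exactly one factor $u_j$ and the event ``$W[i]=d$'' depends only on that factor; there is no cross‑boundary contribution.

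The second step is to turn this identity into a convexity statement. Dividing by $L$, subtracting $1/b$, and using $\sum_{j=1}^{n}|u_j|/L=1$, I would rewrite the deviation of $W$ from the uniform frequency as a weighted average of the deviations of the factors:
\[
\frac{\occ(W,d)}{|W|}-\frac1b=\sum_{j=1}^{n}\frac{|u_j|}{L}\left(\frac{\occ(u_j,d)}{|u_j|}-\frac1b\right).
\]
The triangle inequality then gives $\bigl|\occ(W,d)/|W|-1/b\bigr|\le\sum_{j=1}^{n}(|u_j|/L)\,\bigl|\occ(u_j,d)/|u_j|-1/b\bigr|$, and each summand is at most $(|u_j|/L)\,D(u_j,b)$ by the definition of simple discrepancy. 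Taking the maximum over all digits $d$ on the left‑hand side yields
\[
D(W,b)\le\frac{1}{L}\sum_{j=1}^{n}D(u_j,b)\,|u_j|=\left.\sum_{j=1}^{n}D(u_j,b)\,|u_j|\,\middle/\,\sum_{h=1}^{n}|u_h|\right.,
\]
which is exactly the claim.

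I do not expect a genuine obstacle: the only point requiring care is the additivity of digit occurrences over concatenation in the first step, and that is immediate precisely because the objects being counted have length one. It is worth noting that the very same argument applied to $D_\ell$ for $\ell\ge 2$ would fail verbatim, since there one must absorb the $O(\ell n/L)$ occurrences that span factor boundaries; but the present lemma is stated only for simple discrepancy, so this complication does not arise here.
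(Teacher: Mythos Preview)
Your proof is correct and follows essentially the same route as the paper's: both use the additivity $\occ(W,d)=\sum_j \occ(u_j,d)$ for single digits, rewrite the deviation as a weighted average, apply the triangle inequality, and bound each summand by $D(u_j,b)$. The only cosmetic difference is that the paper fixes the maximizing digit at the outset while you carry an arbitrary $d$ and take the maximum at the end.
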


\begin{proof}
Let $d$ be a digit in base $b$ which maximizes $\left|\occ\left(\prodd_{j \le n} u_j, d\right)\middle/\left|\prodd_{j \le n} u_j\right| - 1/b\,\right|$.
\begin{align*}
D\left(\prodd_{j \le n} u_j, b\right) 
  =& \left|\occ\left(\prodd_{j \le n} u_j, d\right)\middle/\left|\prodd_{j \le n} u_j\right| - 1/b\,\right| \\
\le& \left| \left(\sum_{j=1}^n \occ(u_j,d) \middle/ \sum_{h=1}^n |u_h|\right) - \frac{1}{b}\,\right| \\
\le& \left. \left| \sum_{j=1}^n \occ(u_j,d) - \frac{|u_j|}{b} \right|\ \middle/\ \sum_{h=1}^n |u_h| \right. \\
\le& \left. \sum_{j=1}^n |u_j| \left|\frac{\occ(u_j,d)}{|u_j|} - \frac{1}{b} \right|\ \middle/\ \sum_{h=1}^n |u_h| \right. \\
\le& \left. \sum_{j=1}^n D(u_j,b) |u_j|\ \middle/\ \sum_{h=1}^n |u_h| \right..
\end{align*}
\end{proof}

Borel's fundamental theorem showing that almost all real numbers are absolutely normal 
is underpinned by the fact that, for any base, almost every
sufficiently long block has small simple discrepancy relative to that base.
We need an explicit bound for the number of blocks of a given length 
having larger simple discrepancy than a given value. 

\begin{lemma}[Lemma 2.5 in \cite{BecHeiSla1301} adapted from Theorem 148 in \cite{HarWri08}]
Let  $p_{b}(k,i)$ be the number of blocks of length $k$ in base $b$ where a given digit occurs
exactly~$i$ times. 
  Fix a base $b$ and a block of length $k$.  For every real $\varepsilon$ such that 
$6/k \leq  \varepsilon \leq 1/b$,  $\sum_{ i=0}^{ k/b-\varepsilon k}    p_b(k,i)$
and  $\sum_{i= k/b+\varepsilon k}^{ k}  p_b(k,i)$ 
are  at most $b^k e^{- b \varepsilon^2 k/6}$.
\end{lemma}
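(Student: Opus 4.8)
The plan is to recognize the numbers $p_b(k,i)$ as unnormalized binomial probabilities and then invoke a standard large-deviation estimate. The number of length-$k$ blocks in base $b$ in which a fixed digit occurs exactly $i$ times is $p_b(k,i)=\binom{k}{i}(b-1)^{k-i}$, since one chooses the $i$ positions for that digit and fills the remaining $k-i$ positions with any of the other $b-1$ digits; in particular $\sum_{i=0}^k p_b(k,i)=b^k$. Dividing by $b^k$, the quantity $p_b(k,i)/b^k=\binom{k}{i}(1/b)^i(1-1/b)^{k-i}$ is exactly $\Pr[X=i]$ for $X$ a sum of $k$ independent Bernoulli trials of success probability $1/b$, so $\mathbb{E}[X]=k/b$. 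Thus the two sums in the statement, divided by $b^k$, are the lower tail $\Pr[X\le k/b-\varepsilon k]$ and the upper tail $\Pr[X\ge k/b+\varepsilon k]$, and it suffices to bound each by $e^{-b\varepsilon^2 k/6}$.

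For this I would apply the exponential Markov inequality. For the upper tail and any $\lambda>0$, $\Pr[X\ge k/b+\varepsilon k]\le e^{-\lambda(k/b+\varepsilon k)}\,\mathbb{E}[e^{\lambda X}]=e^{-\lambda(k/b+\varepsilon k)}(1-1/b+e^{\lambda}/b)^k$; optimizing over $\lambda$ — equivalently, setting $\delta=\varepsilon b$ and using the multiplicative Chernoff bound $\Pr[X\ge(1+\delta)\mathbb{E}[X]]\le e^{-\mathbb{E}[X]\,\delta^2/3}$, valid for $0\le\delta\le1$ — yields a bound of $e^{-(k/b)(\varepsilon b)^2/3}=e^{-b\varepsilon^2 k/3}$, which is smaller than $e^{-b\varepsilon^2 k/6}$. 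The lower tail is handled symmetrically with $\lambda<0$ (or by the companion bound $\Pr[X\le(1-\delta)\mathbb{E}[X]]\le e^{-\mathbb{E}[X]\,\delta^2/2}$) and is even stronger. Here the hypothesis $\varepsilon\le 1/b$ is precisely what guarantees $\delta\le1$, so that the clean quadratic-in-$\delta$ form applies; the $1/6$ in the statement (versus the $1/3$ one actually gets) is deliberate slack, and the hypothesis $6/k\le\varepsilon$ is not needed for this argument.

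An alternative route, closer to the elementary proof of Theorem 148 in \cite{HarWri08} and to Lemma 2.5 of \cite{BecHeiSla1301}, avoids probability altogether: one compares consecutive terms through the ratio $p_b(k,i+1)/p_b(k,i)=(k-i)/((i+1)(b-1))$, notes that for $i\ge k/b+\varepsilon k$ this ratio is at most some constant $r<1$ (and for $i\le k/b-\varepsilon k$ its reciprocal is similarly bounded), sums the resulting geometric majorant to reduce each tail to a bounded multiple of a single term $p_b(k,\lceil k/b+\varepsilon k\rceil)$ or $p_b(k,\lfloor k/b-\varepsilon k\rfloor)$, and finally estimates that one binomial term by Stirling's formula. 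In this version the hypothesis $6/k\le\varepsilon$ is the one that does the work: it forces $\varepsilon k$ to exceed a fixed constant, which both keeps $r$ safely below $1$ and dominates the lower-order Stirling error terms, so that the geometric factor times the Stirling estimate is at most $b^k e^{-b\varepsilon^2 k/6}$.

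The only real obstacle is quantitative: getting the constant in the exponent down to $1/6$ while respecting both constraints $6/k\le\varepsilon\le1/b$. In the Chernoff approach this amounts to controlling the Taylor expansion of $\log\mathbb{E}[e^{\lambda X}]$ near $\lambda=0$; in the Stirling approach it amounts to bookkeeping the $O(1/i)$ and $O(1/(k-i))$ error terms together with the geometric constant $1/(1-r)$. Neither step is deep, and the generous constant leaves room for a clean estimate; this is simply the one place the proof requires care.
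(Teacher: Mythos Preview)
The paper does not give its own proof of this lemma; it is simply quoted from \cite{BecHeiSla1301} (and ultimately Hardy--Wright), so there is nothing to compare your argument against here.

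That said, your proposal is correct. The identification $p_b(k,i)=\binom{k}{i}(b-1)^{k-i}$ and the reduction to binomial tails are exactly right, and the multiplicative Chernoff bound with $\mu=k/b$, $\delta=\varepsilon b\le 1$ gives $e^{-b\varepsilon^2 k/3}$, comfortably inside the stated $e^{-b\varepsilon^2 k/6}$; as you observe, this route does not use the hypothesis $6/k\le\varepsilon$. Your second sketch---ratio comparison plus Stirling---is the approach actually used in Hardy--Wright and in \cite{BecHeiSla1301}, and it is there that the lower bound $\varepsilon\ge 6/k$ earns its keep, absorbing the geometric prefactor and the Stirling error into the exponent. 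Either route suffices; the Chernoff argument is shorter and gives a better constant, while the elementary one is self-contained and explains the shape of the hypotheses.
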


\begin{lemma}[Lemma 2.6 in \cite{BecHeiSla1301}]\label{2.7}
  Let $t \geq 2$ be an integer and let $\epsilon$ and $\delta$ be real numbers between $0$ and $1$, with
  $\epsilon \leq 1/t$.  Let $k$ be the least integer greater than the maximum of $\ceil{6/\epsilon}$ and\linebreak
  $-\ln(\delta/2t) 6/\epsilon^2$.  Then, for all $b\leq t$ and for all
  $k'\geq k$, the fraction of blocks $x$ of length $k'$ in base $b$ for which $D(x,b)>\epsilon$ is
  less than $\delta$.
\end{lemma}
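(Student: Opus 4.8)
The plan is to apply the previous lemma (the one bounding the tail sums $\sum p_b(k,i)$) separately to the lower and upper tails of the occurrence count of each digit, take a union bound over the $b$ digits, and finally compare the resulting exponential bound with $\delta$ using the defining property of $k$.

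First I would fix $b \le t$ and $k' \ge k$ and check that the hypotheses of the previous lemma hold with $k'$ in place of $k$ and the same $\epsilon$: since $k' \ge k > \ceil{6/\epsilon} \ge 6/\epsilon$ we get $6/k' \le \epsilon$, and since $\epsilon \le 1/t$ and $b \le t$ we get $\epsilon \le 1/b$. Hence, for each digit $d$ in base $b$, the number of blocks $x$ of length $k'$ with $\occ(x,d)$ in the lower tail or in the upper tail is at most $2\,b^{k'} e^{-b\epsilon^2 k'/6}$, by adding the two tail bounds. If $D(x,b) > \epsilon$ then some digit $d$ satisfies $|\occ(x,d)/k' - 1/b| > \epsilon$, so $\occ(x,d)$ is either $< k'/b - \epsilon k'$ or $> k'/b + \epsilon k'$; here one must be slightly careful that the strict inequality together with integrality of $\occ(x,d)$ places it inside the (floor-truncated) sum $\sum_{i=0}^{k'/b-\epsilon k'}p_b(k',i)$, and symmetrically for the upper range. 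A union bound over the $b$ digits then yields at most $2\,b^{k'+1} e^{-b\epsilon^2 k'/6}$ blocks $x$ of length $k'$ with $D(x,b) > \epsilon$, so their fraction among all $b^{k'}$ blocks is at most $2\,b\,e^{-b\epsilon^2 k'/6}$.

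It then remains to show $2\,b\,e^{-b\epsilon^2 k'/6} < \delta$. Since $\delta < 1 \le t$ we have $\delta/(2t) < 1$, so $-\ln(\delta/2t) > 0$ and the quantity defining $k$ is meaningful. From $k' \ge k > -\ln(\delta/2t)\cdot 6/\epsilon^2 = 6\ln(2t/\delta)/\epsilon^2$ together with $\ln(2t/\delta) \ge \ln(2b/\delta) \ge \ln(2b/\delta)/b$ (using $b \le t$ and $b \ge 1$), I would deduce $k' > 6\ln(2b/\delta)/(b\epsilon^2)$, i.e.\ $b\epsilon^2 k'/6 > \ln(2b/\delta)$, hence $e^{-b\epsilon^2 k'/6} < \delta/(2b)$ and therefore $2\,b\,e^{-b\epsilon^2 k'/6} < \delta$, as required.

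The argument is bookkeeping once the preceding lemma is available; the only points demanding attention are (i) matching the strict discrepancy inequality to the closed-form tail sums of that lemma, and (ii) verifying that the single choice of $k$, which is calibrated to $t$ rather than to the particular base $b$, dominates the $b$-dependent threshold $6\ln(2b/\delta)/(b\epsilon^2)$ for every $b \le t$ — which it does with room to spare, since the factor $1/b$ only helps. I do not expect a genuine obstacle beyond these verifications.
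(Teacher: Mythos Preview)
The paper does not give its own proof of this lemma: it is quoted verbatim as Lemma~2.6 of \cite{BecHeiSla1301} and used as a black box. So there is no in-paper argument to compare against.

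Your derivation from the preceding lemma is correct. The hypothesis check, the union bound over digits, and the final comparison $2b\,e^{-b\epsilon^2 k'/6} < \delta$ all go through as you indicate; in particular your observation that the $t$-calibrated threshold dominates the $b$-dependent one because $\ln(2b/\delta)/b \le \ln(2t/\delta)$ for $2\le b\le t$ is exactly the point. This is the standard way to obtain the statement from the Hardy--Wright tail bound, and matches what the cited source does.
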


\subsection{On descriptive set theory}

Recall that the Borel hierarchy for subsets of the real numbers is the stratification of the
$\sigma$-algebra generated by the open sets with the usual interval topology.  For references see
Kechris's textbook \cite{Kec95} or Marker's lecture notes \cite{Mar02}.

A set $A$ is $\bSigma^0_1$ if and only if $A$  is open and $A$ is 
$\bPi^0_1$ if and only if $A$ is closed.  
$A$ is $\bSigma^0_{n+1}$ if and only if it is a
countable union of $\bPi^0_{n}$ sets, and  $A$  is $\bPi^0_{n+1}$ if and only if it is a countable
intersection of $\mathbf\Sigma^0_n$ sets. 

$A$ is hard for a Borel class if  and only if every set in the class is reducible to $A$ by a continuous map.
$A$ is complete in a class if it is hard for this class and belongs to the class.

When we restrict to intervals with rational endpoints and computable countable unions and intersections, we obtain 
the  effective or lightface Borel hierarchy. 
One way to present the finite levels of the effective  Borel hierarchy
is by means of the arithmetical hierarchy of formulas in the language of second-order arithmetic. 
Atomic formulas in this language assert  algebraic identities between integers 
or membership of real numbers in intervals with rational endpoints.
A formula in the arithmetic hierarchy involves only quantification over integers.
A formula  is $\Pi^0_0$ and $\Sigma^0_0$ if  all  its quantifiers are bounded.  
It is $\Sigma^0_{n+1}$ if it has the form $\exists x\, \theta$  where $\theta$ is $\Pi^0_n$,
and it  is $\Pi^0_{n+1}$ if it has the form $\forall x\, \theta$ where  $\theta$ is $\Sigma^0_n$.  

A set $A$ of real numbers is $\Sigma^0_n$ (respectively $\Pi^0_n$) in the effective Borel hierarchy if
and only if membership in that set is definable by a formula
which is $\Sigma^0_n$ (respectively
$\Pi^0_n$).
Notice that every $\Sigma^0_n$ set is $\bSigma^0_n$ and every  $\Pi^0_n$ set is $\bPi^0_n$.
In fact for every set $A$ in $\bSigma^0_n$ there is a  $\Sigma^0_n$ formula and a real parameter 
such that membership in $A$ is defined by that $\Sigma^0_n$ formula relative to that real parameter.

$A$ is hard for an  effective Borel class if and only if every set in the class is
reducible to $A$ by a computable map.  
As before, $A$ is complete in an effective  class if it is hard for this class and belongs to the class.
Since computable maps are continuous, 
proofs of hardness in the effective hierarchy often yield proofs of hardness in general by
relativization.  This is the case in our work.

\section{Main Theorem}

By the form of its definition, normality to a fixed base is explicitly a $\pt$ property of real numbers.
The same holds for absolute normality. 
Absolute abnormality is, for all bases, the negation of normality, hence a $\Pi^0_4$ property.

\begin{lemma}[As in \cite{Mar02}]\label{3.1}
The set of real numbers that are normal to a given base is $\Pi^0_3$. \newline
The set of real numbers that are absolutely normal is $\Pi^0_3$. \newline
The set of real numbers that are absolutely abnormal is $\Pi^0_4$.
\end{lemma}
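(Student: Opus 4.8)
The plan is to unwind the definitions of Section~2 into explicit arithmetical formulas and count quantifiers, using the discrepancy characterizations rather than the raw limit definitions, since discrepancy is easier to express with bounded quantifiers. First I would handle normality to a fixed base $b$. By Lemma~\ref{2.4}, $R$ is normal to base $b$ iff for every $\ell \geq 1$, $\lim_{n\to\infty} D_\ell((R)_b \restriction n, b) = 0$. Spelling out the limit, this becomes: for all $\ell$, for all $m$, there exists $N$, for all $n \geq N$, $D_\ell((R)_b \restriction n, b) < 1/m$. The innermost matrix is a statement about the first $n$ digits of $(R)_b$; since the digits of $R$ in base $b$ are uniformly computable from $R$ (and membership ``$(R)_b \restriction n = w$'' reduces to $R$ lying in a specific interval with rational endpoints), and $D_\ell$ is a rational-valued function of a finite block computed by bounded search over the $b^\ell$ blocks $v$ of length $\ell$, the matrix is $\Sigma^0_0$ (equivalently $\Pi^0_0$) in $R$. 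The prefix $\forall \ell\, \forall m\, \exists N\, \forall n$ collapses the two leading universal quantifiers and the two trailing ones, giving the shape $\forall\,\exists\,\forall$, i.e. $\Pi^0_3$.

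Next, absolute normality is, by Definition~\ref{2.1}, normality to every base, so it is $\forall b$ applied to the $\Pi^0_3$ predicate just obtained; prepending a universal integer quantifier to a $\Pi^0_3$ formula keeps it $\Pi^0_3$ (the block of universal quantifiers at the front absorbs the new one). Hence the set of absolutely normal reals is $\Pi^0_3$. Finally, absolute abnormality is the statement that $R$ is not normal to any base: $\forall b\, (R$ is not normal to base $b)$. Not being normal to base $b$ is the negation of a $\Pi^0_3$ predicate, hence $\Sigma^0_3$, of the form $\exists \ell\, \exists m\, \forall N\, \exists n\, (D_\ell((R)_b\restriction n,b)\geq 1/m)$; prepending $\forall b$ to a $\Sigma^0_3$ formula yields a $\Pi^0_4$ formula, and the leading $\forall b\,\exists \ell\,\exists m$ collapses to $\forall\,\exists$, giving the $\forall\,\exists\,\forall\,\exists$ shape of $\Pi^0_4$. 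This is exactly the bound claimed.

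The only genuinely non-routine point is verifying that the matrix really is arithmetically bounded, i.e. that all the ``real-number'' content sits inside atomic formulas of the allowed form. The key observations are: (i) ``the $n$-th digit of $(R)_b$ equals $d$'' is decidable from finitely many interval-membership facts about $R$, because $(R)_b\restriction n = w$ holds iff $R \in [\,\sum w[i]b^{-i},\ \sum w[i]b^{-i} + b^{-n})$ up to the usual tie-breaking convention on eventually-$(b-1)$ expansions, and this convention only affects a measure-zero set that is already covered by the ``$a_i < b-1$ infinitely often'' clause; (ii) once the finite block $(R)_b\restriction n$ is pinned down, $\occ$, $D$, and $D_\ell$ are primitive-recursive functions of that block and of $\ell, b$, and the comparison $D_\ell < 1/m$ is a decidable relation among integers. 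So the matrix is $\Delta^0_0$ relative to $R$, and the quantifier count above is valid. I would cite Lemma~\ref{2.4} for the discrepancy reformulation and otherwise present the formula explicitly and read off its complexity, noting in a sentence that the same formulas witness membership in the boldface classes $\bfPi^0_3$ and $\bfPi^0_4$.
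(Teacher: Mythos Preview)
Your proposal is correct and matches the paper's approach: the paper does not give a detailed proof of this lemma but simply observes, in the sentence preceding it, that ``by the form of its definition, normality to a fixed base is explicitly a $\Pi^0_3$ property'' and that absolute abnormality is ``for all bases, the negation of normality, hence a $\Pi^0_4$ property,'' citing \cite{Mar02} for details. Your write-up is a careful unpacking of exactly this quantifier count; the only difference is that you route through the block-discrepancy characterization of Lemma~\ref{2.4} rather than Definition~\ref{2.1} directly, which makes no difference to the complexity since both yield a $\forall\exists\forall$ prefix over a computable matrix.
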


Thus, to prove completeness of the set of absolutely normal real numbers for the class $\Pi^0_3$ we
need only prove hardness.  We prove our hardness result for the Borel hierarchy by relativizing
a hardness result for $\pt$ subsets of the natural numbers.  Let $\cal L$ be the language of
first-order arithmetic.  As usual, a sentence is a formula without free variables.

\begin{theorem}\label{3.2}
There is a computable reduction from $\pt$ sentences $\varphi$ in $\cal L$ to indices
$e$ such that the following implications hold.
\begin{enumerate}
\item[] If $\varphi$ is true then $e$ is the index of a computable absolutely normal number.

\item[] If $\varphi$ is false  then $e$ is the index of a computable absolutely abnormal number.
\end{enumerate}
\end{theorem}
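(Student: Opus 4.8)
The plan is to first bring $\varphi$ to the form $\forall j\,\exists m\,\forall k\ \theta(j,m,k)$ with $\theta$ a formula having only bounded quantifiers (hence decidable), so that everything below is computed uniformly from an index for $\theta$ and an index $e$ is returned via the s-m-n theorem; write $\psi(j)$ for the $\Sigma^0_2$ formula $\exists m\,\forall k\ \theta(j,m,k)$ and $w_j$ for its least witness when $\psi(j)$ holds. The index $e$ will name a computable real $R\in(0,1)$ presented as an infinite concatenation of blocks in base $2$, produced by interleaving two routines. The \emph{normalising routine} is the algorithm of~\cite{BecHeiSla1301}: given the current prefix and targets $t,\varepsilon$, Lemma~\ref{2.7} shows that for all sufficiently large lengths the fraction of base-$b$ blocks of that length with simple discrepancy above $\varepsilon$ is small for every $b\le t$ simultaneously, so one can computably append a block --- much longer than everything written so far --- keeping $D(\cdot,b)<\varepsilon$ for all $b\le t$; iterating with $t\to\infty$, $\varepsilon\to0$ yields an absolutely normal real. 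The \emph{spoiling routine} appends a run of the digit $0$ of length a large multiple of the current prefix; by the concatenation lemma this forces the resulting prefix to have simple discrepancy close to $1/2$ in base $2$, and the same device used along a cofinal sequence of bases --- or simply letting $R$ become eventually $0$, hence a dyadic rational --- makes $R$ normal to no base, so absolutely abnormal (Definition~\ref{2.1}, Lemma~\ref{2.2}).

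The scheduling runs a search for a proof of $\varphi$. At each stage we keep, for every $j$ currently in play, a guessed witness $g_j$ and a verified bound $\kappa_j$, and devote the stage to testing the next instance $\theta(j,g_j,\kappa_j)$ for a suitably chosen $j$: on success $\kappa_j$ increases; on failure we put $g_j:=g_j+1$, $\kappa_j:=0$, and re-initialise, in a priority fashion, the data of every lower-priority $j'$. A unit of \emph{progress} is made whenever the guesses of an initial segment of requirements appear settled and confirmed to greater depth than ever before, and each unit of progress triggers one call of the normalising routine with $t,\varepsilon$ set by the amount of progress; a change of a high-priority guess --- which defeats the apparent confirmation of the corresponding $\psi(j)$ --- triggers instead the spoiling routine. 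If $\varphi$ is true then $g_j\to w_j$ for every $j$ and a priority analysis gives that every requirement is eventually never defeated again; the construction should then use the spoiling routine only finitely and call the normalising routine with $t\to\infty$, so that $R$ is absolutely normal. If $\varphi$ is false, let $j_0$ be least with $\psi(j_0)$ false: the requirements below $j_0$ settle but $g_{j_0}$ changes infinitely often, so the spoiling routine is invoked infinitely often while progress stays bounded, and $R$ should come out normal to no base.

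The technical heart --- the step I expect to be the main obstacle --- is making these two alternatives genuinely dichotomous, i.e.\ arranging the bookkeeping so that the cumulative effect of the spoiling routine is asymptotically negligible \emph{exactly} when $\varphi$ is true. The naive scheme fails: in the true case each requirement is defeated only finitely often, but there are infinitely many requirements, so guesses change infinitely often overall, and spoiling on every change would already make $R$ abnormal. One must therefore tie each spoiling action to the currently \emph{least unstable} requirement, give it a length which is a fixed fraction of the current prefix depending only on that requirement's index --- positive but tending to $0$ with the index --- and draw its base from a schedule that, for each fixed index, eventually revisits every base; then in the true case the least unstable requirement grows without bound, the spoiling blocks form a summable family washed out by the ever-longer normalising blocks, and $R$ stays absolutely normal, while in the false case the construction is eventually trapped with least unstable requirement $j_0$, the spoiling actions there recur at fixed positive relative size and sweep through every base, and $R$ becomes absolutely abnormal. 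Organising this amounts to decorating a tree of strategies mirroring the alternation $\forall j\,\exists m\,\forall k$, proving that its true path is infinite iff $\varphi$ holds, and using Lemma~\ref{2.7} and the concatenation lemma to control the discrepancies along and off the true path; the crux is to verify that the true-case construction really delivers absolute normality rather than merely ``normality to each bound for a while''. Since the whole reduction is uniformly computable in an index for $\theta$, relativising it and using the fact recalled earlier --- that every $\bPi^0_3$ set is defined by a $\Pi^0_3$ formula relative to a real parameter --- will then also give the boldface completeness.
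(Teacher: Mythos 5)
Your route is genuinely different from the paper's, and the two places where it differs are exactly where your acknowledged gaps live. The paper avoids the priority argument entirely: it first passes to the normal form $\forall x\,\exists^{<\infty} y\ C(x,y)$ and reduces $\varphi$ to a computable sequence $f\in\N^\N$ with a clean combinatorial dichotomy --- if $\varphi$ is true, no integer occurs infinitely often in $f$; if $\varphi$ is false, all but finitely many integers do. Your ``least unstable requirement'' is an attempt to recover the same information from the $\forall\exists\forall$ form by injury and re-initialisation, but your trigger for normalising is underspecified in a way that breaks the false case: if $j_0$ is least with $\psi(j_0)$ false, the requirements below $j_0$ settle and keep being ``confirmed to greater depth than ever before'' forever, so a unit of progress as you define it occurs infinitely often, the normalising routine is called infinitely often (with bounded parameters), and it is no longer clear that the spoiler wins. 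Progress must be tied to the \emph{length} of the settled initial segment, which is precisely what the $\exists^{<\infty}$ normal form hands the paper for free, with no tree of strategies and no injury.

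The more serious divergence is the spoiling device, and here there is a genuine gap rather than a stylistic choice. The paper never appends runs of zeros: its refinement function $\fRef_i$ simultaneously drives the \emph{simple} discrepancy below $2/(i+2)$ for every base $b\le i+1$ and forbids the block $0^{2\ell_i}$ in the appended portion, so that the \emph{block} discrepancy $D_{2\ell_i}$ stays above $b^{-2\ell_i-1}$. This spoiling is free --- it costs nothing in digit frequencies --- so every step both normalises and spoils, and the dichotomy is automatic: if the indices in $f$ tend to infinity the simple discrepancies tend to $0$ in every base, while if some index $b$ recurs infinitely often the block discrepancy at the fixed length $2\ell_b$ has positive limsup in base $b+1$, destroying normality to cofinitely many bases, which suffices by Lemma~\ref{2.2} and Lemma~\ref{2.4}. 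Your runs of zeros attack simple normality directly, which is why you are forced into the washing-out argument that you yourself identify as the main obstacle and do not carry out; in addition, a run of zeros written into the binary expansion spoils only base $2$ --- the base-$3$ digits it determines are those of the left endpoint of a dyadic interval and need not be biased at all --- so to attack base $b$ your spoiler must steer the real into a $b$-adic interval with a biased base-$b$ name, which already requires the nested multi-base interval machinery (Lemma~\ref{4.3} and the $t$-sequences) that your sketch assigns only to the normalising routine. Until the quantitative claim that the two alternatives are genuinely dichotomous is proved, and the per-base spoiling is implemented through simultaneous control of all expansions, the proposal is a plausible program rather than a proof.
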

We postpone the  proofs of Theorem~\ref{3.2}  and its corollaries to the end of the next section.

\begin{corollary}\label{3.3}
The set of absolutely normal numbers is $\pt$-complete, and hence    $\mathbf\pt$-complete.
\end{corollary}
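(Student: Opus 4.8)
The plan is to derive the corollary from Theorem~\ref{3.2} by a routine relativization; the mathematical content is entirely in that theorem. Membership is already settled: by Lemma~\ref{3.1} the set of absolutely normal numbers is $\pt$, hence also $\mathbf\pt$, so only hardness remains in each case.

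For effective hardness, I would take an arbitrary $\pt$ set $B$ of reals and exhibit a computable map reducing it to the set of absolutely normal numbers. Membership in $B$ is definable by a $\pt$ formula $\psi$ of second-order arithmetic with a single real free variable~$\alpha$. For each fixed $\alpha$, the instance $\psi(\alpha)$ is, uniformly in $\alpha$, a $\pt$ sentence in the language $\cal L$ expanded with a predicate for $\alpha$, that is, a $\pt$ sentence relative to the oracle $\alpha$. I would then apply the relativization of Theorem~\ref{3.2} to this oracle: it produces, uniformly computably in $\alpha$, an index $e(\alpha)$ of a real $g(\alpha)$ whose base-$2$ expansion is computable from $\alpha$ and such that $g(\alpha)$ is absolutely normal when $\psi(\alpha)$ holds, and absolutely abnormal---hence not normal to base $2$, a fortiori not absolutely normal---when $\psi(\alpha)$ fails. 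Thus $g^{-1}(\{\,\text{absolutely normal numbers}\,\}) = B$.

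The map $\alpha \mapsto g(\alpha) \in [0,1]$ is computable, since each initial segment of the base-$2$ expansion of $g(\alpha)$ is produced after only finitely many queries to the oracle $\alpha$; in particular it is continuous. As $B$ was an arbitrary $\pt$ set, this shows the set of absolutely normal numbers is $\pt$-hard, and with Lemma~\ref{3.1} it is $\pt$-complete. For the boldface statement I would run the identical argument starting from an arbitrary $\mathbf\pt$ set $B$: by the remark in the preliminaries its membership is defined by a $\pt$ formula relative to some real parameter~$p$, and relativizing Theorem~\ref{3.2} to the pair $\langle p,\alpha\rangle$ now gives a continuous reduction; equivalently, one observes, as noted in the preliminaries, that because our reduction is effective and uniform it relativizes, yielding hardness in the boldface hierarchy.

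The one point that needs checking, beyond Theorem~\ref{3.2} itself, is that the construction behind that theorem relativizes uniformly to an arbitrary oracle. I expect this to be transparent once the proof is in hand: the construction is an effective procedure, and the combinatorial facts it rests on---the bound on the discrepancy of a concatenation of blocks and the bounds on the number of blocks of high simple discrepancy---are uniform in the base and entirely oracle-free. Together with the trivial observation that an absolutely abnormal number is not absolutely normal, this completes the reduction, so the only real obstacle lies in Theorem~\ref{3.2}.
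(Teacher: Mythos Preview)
Your proposal is correct and follows essentially the same route as the paper: membership from Lemma~\ref{3.1}, and hardness by relativizing the reduction of Theorem~\ref{3.2} to an oracle~$\alpha$ (and, for the boldface case, to an additional real parameter). The paper phrases the boldface step via a detour through Baire space before relativizing to the pair~$(X,P)$, but this is the same argument you give.
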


The reduction in Theorem~\ref{3.2} gives just two possibilities: absolute
normality or absolute abnormality;
that is, normality to all bases simultaneously, or to no base at all.
Consequently, it also separates normality in base $b$ from non-normality in
base $b$, for any given $b$.

\begin{corollary}\label{3.4}
For every base $b$, the set of normal numbers in base
$b$ is $\pt$-complete, and hence  $\mathbf\pt$-complete.
\end{corollary}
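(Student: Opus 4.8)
The plan is to establish Corollary~\ref{3.4} as an immediate consequence of Theorem~\ref{3.2} together with Lemma~\ref{3.1}. Fix a base $b$. Membership in the set of numbers normal to base $b$ is a $\pt$ property by Lemma~\ref{3.1}, so it only remains to prove $\pt$-hardness, and then $\mathbf\pt$-hardness follows by relativization exactly as in the proof of Corollary~\ref{3.3}. For hardness, I would start from an arbitrary $\pt$ set $S$ of natural numbers, together with a $\pt$ formula $\varphi(n)$ defining it, and apply the reduction of Theorem~\ref{3.2} to the sentence $\varphi(\underline{n})$ for each $n$, uniformly in $n$. This yields a computable map $n \mapsto e_n$ where, if $n \in S$, then $e_n$ is an index of a computable absolutely normal real, and if $n \notin S$, then $e_n$ is an index of a computable absolutely abnormal real.

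The key observation is the dichotomy built into Theorem~\ref{3.2}: the output real is \emph{either} normal to every base \emph{or} normal to no base. In particular, it is normal to base $b$ in the first case and not normal to base $b$ in the second. Hence the map $n \mapsto (\text{real coded by } e_n)$ — which is continuous, indeed computable, since the $e_n$ are obtained uniformly and code computable reals via a fixed effective convention — reduces $S$ to the set of reals normal to base $b$: we have $n \in S$ if and only if the $n$-th output real is normal to base $b$. Since $S$ was an arbitrary $\pt$ subset of $\N$, and since (as noted in the paragraph preceding Lemma~\ref{3.1} and in the treatment of the effective hierarchy in the preliminaries) every $\bfPi^0_3$ set of reals reduces continuously to some $\pt$ set relative to a real parameter, the same argument relativized shows that the set of base-$b$ normal reals is $\mathbf\pt$-hard. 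Combined with the membership statement from Lemma~\ref{3.1}, this gives $\pt$-completeness and $\mathbf\pt$-completeness.

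I expect there to be essentially no obstacle here: the entire content is packaged in Theorem~\ref{3.2}, and the only thing to check carefully is that the reduction witnessing hardness for base $b$ is literally the \emph{same} reduction as the one witnessing hardness for absolute normality — which is precisely the point emphasized in the remark following Corollary~\ref{3.3}. The one mild subtlety worth spelling out is the passage from ``indices of computable reals'' to ``reals as points of a Polish space'': one should note that a computable map producing an index of a computable real can be turned into a continuous (computable) map into the reals with the interval topology, so that the notion of continuous reducibility used in the definition of hardness applies. This is standard and can be dispatched in a sentence, pointing to the conventions fixed in the preliminaries. Thus the proof of Corollary~\ref{3.4} amounts to: apply Theorem~\ref{3.2} uniformly, invoke the normal/abnormal dichotomy to restrict attention to base $b$, and cite Lemma~\ref{3.1} for the upper bound.
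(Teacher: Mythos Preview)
Your proposal is correct and follows essentially the same approach as the paper: invoke Lemma~\ref{3.1} for the upper bound, then observe that the reduction of Theorem~\ref{3.2} already separates normality to base $b$ from non-normality to base $b$ by virtue of the absolute-normal/absolute-abnormal dichotomy, so the very same reduction used for Corollary~\ref{3.3} works here. The paper's proof is terser, but the content is identical.
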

This gives an alternate proof of Ki and Linton's theorem  in \cite{KiLin94} for $\mathbf \pt$-completeness,
that also covers the case of $\pt$-completeness.
Another consequence of Theorem~\ref{3.2} is that absolute abnormality, which
is a $\Pi^0_4$ property, is hard for the classes $\Sigma^0_3$ and   $\bSigma^0_3$.

\section{Proofs}

We shall define a computable reduction that maps $\pt$ sentences  in the language 
of first order arithmetic $\cal L$ 
to indices of  computable infinite sequences of zeros and ones.  
If the given sentence is true then the corresponding
binary sequence is the expansion in base two of an absolutely normal number.
Otherwise, the corresponding binary sequence is the expansion in base two of an
absolutely abnormal number.

Our reduction is the composition of two reductions.  We use Baire space $\N^\N$, 
the set of infinite sequences of positive integers, as an intermediate working space.  
The first reduction  maps sentences from $\cal L$  to programs that produce infinite sequences of positive
integers that reflect the truth or falsity of the given sentences.  
The second reduction maps these programs to ones that produce binary sequences with the appropriate condition on normality. 

By relativizing this reduction, given a $\pt$ formula in second order arithmetic
and a real number $X$, we produce a binary sequence computably from $X$ which is absolutely normal or absolutely abnormal depending on whether the given formula is true of $X$.  This is exactly what is required to establish $\Pi^0_3$-hardness.

\subsection{The first reduction}

Recall that a $\Pi^0_3$ formula in first order arithmetic is equivalent to one starting with a universal
quantifier $\forall$, followed by the quantifier ``there are only finitely many''
$\exists^{<\infty}$ and ended by a computable predicate, see Theorem~XVII and Exercise~14-27 in~\cite{Rog87}.  The computable predicate in this equivalent form comes from the $\Sigma^0_0$ subformula of the original.

\begin{definition}\label{2.1}
The \emph{first reduction} takes sentences in $\cal L$ to positive integers by mapping
$\forall x \exists^{<\infty} y\  C(x,y)$, where $C$ is computable,
to an index of the following program:
For every positive integer $n$ in increasing order, 
let $x = \max\{z\in \N : 2^z \text{ divides } n\}$ and let $y = n/2^x$.
If $y = 1$ or $C(x,y)$ then  append $x,x+1,...,x+y-1$ to the output sequence.
\end{definition}

\begin{lemma}\label{4.2}
  If $\varphi$ is a $\pt$ sentence in $\cal L$ then Definition~\ref{2.1} gives the index of a
  program that outputs an infinite sequence~$f$ of integers such that the subsequence of~$f$'s first
  occurrences is an enumeration of $\N$ in increasing order and the following dichotomy holds:
\begin{enumerate}
\item[]  If $\varphi$ is true then no positive integer occurs infinitely often in $f$.
\item[]  If $\varphi$ is false  then all but finitely many integers occur infinitely often in $f$.
\end{enumerate}
\end{lemma}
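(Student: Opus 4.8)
The plan is to analyze the program of Definition~\ref{2.1} directly, tracking, for each integer $m$, exactly which appended blocks contain $m$. Recall that $\varphi$ is equivalent to a sentence $\forall x\,\exists^{<\infty}y\ C(x,y)$ with $C$ computable, and this is the form to which the first reduction is applied. For a positive integer $n$, let $x$ be the exponent of $2$ in $n$ and $y=n/2^{x}$ its odd part, so that $n\mapsto(x,y)$ is a bijection between $\N$ and pairs $(x,y)$ with $y$ a positive odd integer; when the program processes $n$ it appends the block $x,x+1,\dots,x+y-1$, of length $y$, and it does so exactly when $y=1$ or $C(x,y)$ holds. Since the block $x,\dots,x+y-1$ is strictly increasing, each value occurs in it at most once, so the number of occurrences of a fixed $m$ in the output $f$ equals the number of $n$ for which the block of $n$ is appended and contains $m$, i.e. the number of pairs $(x,y)$ with $y$ odd, $x\le m\le x+y-1$, and $y=1$ or $C(x,y)$.

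First I would dispose of the claims that do not involve $\varphi$. Taking $n=2^{m}$ gives $x=m$, $y=1$, so the singleton block $m$ is appended for every $m\ge 0$; hence $f$ is infinite and every nonnegative integer occurs in it. For the order of first occurrences, the key observation is that whenever a block $x,\dots,x+y-1$ containing $m$ is appended while processing $n=2^{x}y$, all values smaller than $m$ have already appeared in $f$: the values $x,\dots,m-1$ occur earlier inside this very block, and the values $0,\dots,x-1$ were output earlier, when the smaller indices $2^{0},\dots,2^{x-1}<2^{x}\le n$ were processed. Therefore the first occurrence of $m$ strictly precedes that of $m+1$ for every $m$, and since every $m$ occurs, the subsequence of $f$'s first occurrences enumerates $\N$ in increasing order.

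Next, the dichotomy, using the count above. If $\varphi$ is true, fix $m$. In the disjunct $y=1$ we are forced to have $x=m$, which contributes the single pair $(m,1)$. In the disjunct $C(x,y)$ we have $0\le x\le m$, so $x$ ranges over finitely many values, and for each of them $\exists^{<\infty}y\ C(x,y)$ gives only finitely many admissible $y$; altogether only finitely many pairs, so $m$ occurs in $f$ only finitely often. Hence no integer occurs infinitely often. If $\varphi$ is false, pick $x_{0}$ with $C(x_{0},y)$ for infinitely many $y$. For each such $y$, processing $n=2^{x_{0}}y$ appends the block $x_{0},\dots,x_{0}+y-1$, and every $m\ge x_{0}$ lies in this block as soon as $y\ge m-x_{0}+1$; since there are infinitely many such $y$, every $m\ge x_{0}$ occurs infinitely often, so all but finitely many integers do.

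I expect the only genuinely delicate point to be the first-occurrence claim: one must be sure that the $y=1$ disjunct really forces every integer into $f$ and that interleaving arbitrary $C$-triggered blocks never outputs a value before all smaller values have appeared. Both are handled by the single structural fact that the appended blocks are integer intervals starting at $x=$ the $2$-adic exponent of the index, together with the monotonicity $2^{0}<\dots<2^{x}\le n$. The counting in the true case is then routine bookkeeping, the essential point being that a fixed $m$ interacts only with the finitely many start-values $x\le m$.
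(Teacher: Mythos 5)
Your proposal is correct and follows essentially the same route as the paper's proof: identify the appending pairs $\langle x,y\rangle$, use the monotone order in which indices are processed to show the first occurrences enumerate $\N$ in increasing order, observe in the true case that a fixed $m$ receives contributions only from the finitely many appending pairs with $x\le m$, and in the false case flood every $m\ge x_0$ from a single $x_0$ admitting infinitely many witnesses. The one wrinkle you should be aware of --- the literal decoding $n\mapsto(x,y)$ produces only odd values of $y$, so your false-case step ``processing $n=2^{x_0}y$ appends the block $x_0,\dots,x_0+y-1$'' is only valid for odd $y$, and the definition of the first reduction should really be read as using a pairing bijection onto all of $\N\times\N$ --- is inherited from the paper itself, whose own proof likewise asserts that all pairs are processed.
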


\begin{proof}
Assume $\varphi$ of the form  $\forall x \exists^{<\infty} y\  C(x,y)$, where $C$ is computable.
We say that a tuple $\tuple{x,y}$ is \emph{appending} if $y=1$ or $C(x,y)$.
It is clear by inspection that all possible pairs
$\tuple{x,y}$ with $x,y \in \N$ are processed, and that $\tuple{x+1,y}$ and $\tuple{x,y+1}$ are
always processed after $\tuple{x,y}$. Thus, the first occurrence of an integer $x$ in $f$ is due to
the appending tuple $\tuple{x,1}$.  Moreover, if $x > 1$ then the appending tuple $\tuple{x,1}$ is
processed after $\tuple{x-1,1}$. Thus, $x$ occurs for the first time in $f$ after $x-1$. 

Suppose now $\varphi$ is true.
For any $x$, there are finitely many appending tuples of the form $\tuple{x',y}$ with $x' \le
x$. After all such appending tuples have been processed, $x$ will not be appended to the output
sequence.  Thus no positive integer can occur infinitely often in $f$.

Now suppose $\varphi$ is false. Let $x$ be such that there are infinitely many $y$ such that
$C(x,y)$. Let $z$ be any positive integer.  Each time an appending tuple of the form $\tuple{x,y}$
with $z < y$ is processed, $x+z$ is appended to the output.  Since we assumed there are infinitely
many such tuples, $x+z$ is appended to the output an infinite number of times. Thus, all 
integers greater than $x$ occur infinitely often in the output sequence.
\end{proof}

\subsection{The second reduction}

We use the phrase \emph{$b$-adic interval} to refer to a semi-open interval  of the form $[a/{b^m},(a+1)/{b^m})$,
for $a<b^m$.  We move freely between $b$-adic intervals and base-$b$ expansions.  If $x$ is a
base-$b$ block and it is understood that we are working in base $b$, then we let $.x$ denote the
rational number whose expansion in base $b$ has exactly the digits occurring in $x$. 
Given the block $x$, the reals such that their base-$b$ expansions  extend $x$ are exactly those
belonging to the $b$-adic interval $[.x,.x+b^{-|x|})$. 
Conversely, every
$b$-adic interval $[a/{b^m},(a+1)/{b^m})$ corresponds to a block $x$ as above, where $x$ is obtained
by writing $a$ in base $b$ and then prepending a sufficient number of zeros to obtain a block of
length~$m$.
We use $\mu$ to denote Lebesgue measure and  $\log$ to denote logarithm in base~$2$.

Now we  introduce some tools and establish some of their properties.

\begin{lemma}[\cite{BecHeiSla1301}]\label{4.3}
  For every non-empty interval $I$ and base $b$, there is a
  $b$-adic subinterval $I_b$ of $I$ such that $\measure{I_b}\geq \measure{I} /(2b)$.
  Moreover, such subinterval can be computed uniformly from $I$ and $b$.
\end{lemma}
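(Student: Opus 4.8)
The plan is to take the level $m$ of the $b$-adic partition as large as possible subject to the constraint that some $b$-adic interval of level $m$ still fits inside $I$, and then to write down such an interval explicitly so that the uniformity claim is immediate.

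Write $I=[\alpha,\beta)$ (the argument for the other kinds of non-empty intervals is the same after passing to an interior subinterval) and set $L=\measure{I}=\beta-\alpha$; since the intervals occurring in our construction are subintervals of the unit interval, we have $\measure{I}\le 1$. Let $m$ be the least positive integer with $b^{-m}\le L/2$; such an $m$ exists because $L/2\le 1/2<1$ and $b^{-m}\to 0$. By minimality of $m$ we have $b^{-(m-1)}>L/2$, hence
\[
b^{-m} \;>\; \frac{L}{2b} \;=\; \frac{\measure{I}}{2b}.
\]

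Next I would exhibit an actual $b$-adic interval of level $m$ contained in $I$. Put $j=\ceil{\alpha b^m}$ and $I_b=[\,j b^{-m},(j+1)b^{-m})$. The left endpoint satisfies $jb^{-m}\ge\alpha$ by the choice of $j$. For the right endpoint, $\ceil{\alpha b^m}<\alpha b^m+1$, so $(j+1)b^{-m}<\alpha+2b^{-m}\le\alpha+L=\beta$, where the last inequality uses $2b^{-m}\le L$. Thus $I_b\subseteq I$, and $\measure{I_b}=b^{-m}>\measure{I}/(2b)$, which is the desired bound.

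For the uniformity statement, note that from $I$ and $b$ one computes $L$, then searches $m=1,2,\dots$ for the least value with $b^{-m}\le L/2$ (a decidable test on rationals), then computes $j=\ceil{\alpha b^m}$ and outputs the length-$m$ block coding $I_b$; every step is effective, so $I_b$ is obtained uniformly from $I$ and $b$. I do not expect a real obstacle here; the only delicate point is the calibration of $m$ — it must be large enough that a level-$m$ interval fits in $I$, which requires $b^{-m}\le L/2$ (not merely $\le L$) in order to absorb the rounding in $j$, yet small enough that $b^{-m}$ does not fall below $L/(2b)$, and choosing $m$ minimal with the first property delivers both.
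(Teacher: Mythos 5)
Your proof is correct. The paper itself gives no argument for this lemma (it is imported from \cite{BecHeiSla1301}), but your construction --- take the least $m$ with $b^{-m}\le\measure{I}/2$, so that minimality gives $b^{-m}>\measure{I}/(2b)$ while the factor of $2$ leaves room to round the left endpoint up to a multiple of $b^{-m}$ --- is exactly the standard argument, and the explicit choice $j=\ceil{\alpha b^m}$ makes the uniformity claim immediate.
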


\begin{definition}[\cite{BecHeiSla1301}]\label{4.4}
A \emph{$t$-sequence} is a nested sequence of $t-1$ intervals, $\vec{I}=(I_2,\dots,I_{t})$, such
that $I_2$ is dyadic and for each base $b$, $I_{b+1}$ is a $(b+1)$-adic subinterval of $I_b$
such that $\measure{I_{b+1}}\geq \measure{I_b}/2(b+1)$. 
We let $x_b(\vec{I})$ be the block in base $b$ such that $.x_b(\vec{I})$ is the expansion
of the left endpoint of~$I_b$ in base $b$.
\end{definition}  

We can iteratively apply Lemma~\ref{4.3} to obtain the following corollary.

\begin{corollary}\label{4.5}
  For every non-empty dyadic interval $I$ and every integer $t \ge 2$, there is
  a $t$-sequence that begins with $I$. Moreover, such $t$-sequence can be
  computed uniformly from $I$ and $t$.
\end{corollary}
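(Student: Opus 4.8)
The plan is to build the $t$-sequence one interval at a time by iterating Lemma~\ref{4.3}, exactly as the remark preceding the statement suggests. First I set $I_2 = I$; this is legitimate since $I$ is dyadic by hypothesis, and being a non-empty dyadic interval it has positive Lebesgue measure. Then I proceed by induction on $b$ from $2$ up to $t-1$: given that $I_b$ has already been constructed as a non-empty $b$-adic interval, I apply Lemma~\ref{4.3} to the interval $I_b$ and the base $b+1$, obtaining a $(b+1)$-adic subinterval $I_{b+1} \subseteq I_b$ with $\measure{I_{b+1}} \ge \measure{I_b}/(2(b+1))$. Since $\measure{I_b} > 0$, this bound forces $\measure{I_{b+1}} > 0$, so $I_{b+1}$ is again non-empty and the induction continues until $I_t$ is defined.

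The tuple $\vec{I} = (I_2,\dots,I_t)$ produced this way is a $t$-sequence in the sense of Definition~\ref{4.4}: $I_2$ is dyadic; the sequence is nested because each $I_{b+1}$ was chosen as a subinterval of $I_b$; and for every base $b$ with $2 \le b \le t-1$ the interval $I_{b+1}$ is $(b+1)$-adic inside $I_b$ and satisfies $\measure{I_{b+1}} \ge \measure{I_b}/2(b+1)$, which is precisely the inequality demanded of a $t$-sequence. By construction $\vec{I}$ begins with $I$.

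For the uniformity claim, observe that the construction consists of exactly $t-2$ applications of Lemma~\ref{4.3}, the $b$-th of which takes as input the interval $I_b$ and the base $b+1$ and is carried out, by that lemma, by an algorithm uniform in its inputs. The number of steps and the bases involved are all determined by $t$, so composing these finitely many uniform steps yields an algorithm computing $\vec{I}$ uniformly from $I$ and $t$.

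I do not expect a substantive obstacle here. The only points requiring attention are that non-emptiness (equivalently, positivity of measure) propagates through the iteration, which is immediate from the measure bound in Lemma~\ref{4.3}, and that the measure inequality returned by Lemma~\ref{4.3} at base $b+1$ matches verbatim the one required in Definition~\ref{4.4}; both are routine.
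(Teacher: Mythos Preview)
Your proposal is correct and is exactly the argument the paper has in mind: the paper does not give a separate proof but simply remarks that the corollary follows by iteratively applying Lemma~\ref{4.3}, which is precisely what you carry out in detail.
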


If $\vec{I} = (I_2,\dots,I_t)$ is a $t$-sequence, then for any  base $b\leq t$ and any real $X\in I_{t}$, $X$ has $x_b(\vec{I})$ as
an initial segment of its expansion in base $b$.  If, further, $\vec{I}'=(I'_2,\dots
I'_{t'})$ is a $t'$-sequence with $t\leq t'$ such that $I'_2\subset I_t$ and $X\in I'_{t'}$, then
for each $b\leq t$, $\vec{I'}$ specifies how to extend $x_b(\vec{I})$ to a
longer initial segment $x_b(\vec{I}')$ of the base $b$ expansion of $X$.
As opposed to arbitrary nested sequences, 
for $t$-sequences there is a function of $t$ that gives a lower bound of the ratio between the measures of $I_b$
and $I_{b'}$, for any two bases $b$ and $b'$ both less than or equal to $ t$.  
That is, assuming $b>b'$, we have 
\[
\measure{I_b} \geq \frac{\measure{I_{b'}}}{2^{b-b'}\; b!/b'!}.
\]
In the sequel we use the inequality above repeatedly.

\newcommand{\kj}[1]{{k_{#1}\ceil{\log({#1}+1)} }}
\newcommand{\dd}[1]{\ceil{-\log(\delta_{#1})}}
\newcommand{\kjd}[1]{\kj{#1} + \dd{#1}}

Our next task is  to define a function that, for an integer $i$ and a $t$-sequence $\vec{I}$ (for some $t$),
constructs an $(i+1)$-sequence inside $\vec{I}$ with  good properties.
The method is to determine the expansion of the rational endpoints of each $b$-adic interval in the  $(i+1)$-sequence.
Since the respective $b$-adic intervals are   nested,  the determination of the expansions  is done by adding  suffixes. 

We introduce three functions of $i$,   $(\delta_i)_{i\geq 1}$, $(k_i)_{i\geq 1}$ and $(\ell_i)_{i\geq 1}$,  
that act as parameters for the construction of an $(i+1)$-sequence.
The integer $k_i$ indicates how many digits in base $(i+1)$ can  be determined in each step;
thus, $\kj{i}$  indicates how many digits in base $2$ can be determined in each step
(keep in mind  that, in general, more digits are needed  to ensure the same precision in base $2$ than in a larger base).
The integer $\ell_i$ limits how many  digits in base $2$ can {\em at~most} be determined in each step.
And the rational $\delta_i$  bounds  the relative measure of any two intervals in two consecutive nested $(i+1)$-sequences.
We call $\fRef_i$ the function that given $\vec{I}$ constructs an $(i+1)$-sequence by recursion. 
It is a search through nested $(i+1)$-sequences until one with good properties is reached. 
The choice we make for 
$(\delta_i)_{i\geq 1}$, $(k_i)_{i\geq 1}$ and $(\ell_i)_{i\geq 1}$,  
allow us to prove the correctness of the construction.

\begin{definition} \label{4.6}
Let   $(k)_{i\geq 1}$ and $(\ell)_{i\geq 1}$ be the computable sequences of positive integers and 
let $(\delta)_{i\geq 1}$ be the computable sequence of positive rational numbers less than $1$  such that,
for each $i\geq 1$,
\begin{align*}
\delta_i       =& \frac{1}{2^{2i-2}\ {(i+1)!}^2} \\
k_i              =& \text{least integer greater than}
\max \left( \left\lceil 6(i+2)\right\rceil , 	 -\ln\left(\frac{\delta_i}{2(i+1)^2}\right)  6 (i+2)^2\right) \\
\ell_i          =& \kjd{i}.
\end{align*}
\end{definition}

\begin{definition}\label{4.7}
The function $\fRef_i$ maps a $(p+1)$-sequence $\vec{I} = (I_2,\dots,I_{p+1})$
into an $(i+1)$-sequence $\fRef_i(\vec{I})$, that we define recursively.
\medskip

\noindent
{\em Initial step $0$.} 
Let $\vec{I}_0=(I_{0,2},\dots,I_{0,i+1})$ be an $(i+1)$-sequence 
where $I_{0,2}$ is the  leftmost dyadic subinterval of $I_{p+1}$ such that $\measure{I_{0,2}} \ge \measure{I_{p+1}}/4$.
\medskip

\noindent
{\em Recursive step $j+1$.} Let  $\vec{I}_{j+1}$ be the $(i+1)$-sequence  such that
\begin{itemize}

\item Let $L$ be the leftmost dyadic subinterval of $I_{j,i+1}$ such that $\measure{L} \ge \measure{I_{j,i+1}}/4$.

\item Partition $L$ into $\kj{i}$ many dyadic subintervals of equal measure
$\displaystyle{2^{-\kj{i}} \measure{L}}$. \\
For each such subinterval $J_2$ of $L$, 
define the $(i+1)$-sequence  $\vec{J}=(J_2,J_3,\dots,J_{i+1})$.

\item\label{4.7} 
Let $\vec{I}_{j+1}$ be the leftmost 
 of the  $(i+1)$-sequences $\vec{J}$ considered above
  such that for each base $b \leq i+1$, $D(u_b(\vec{J}),b) \leq 1/(i+2)$, 
where $u_b(\vec{J})$ is  such that $x_b(\vec{I_j}) u_b(\vec{J}) = x_b(\vec{J})$.
\end{itemize}

\noindent
 Repeat the recursive step until step $n$ when all the following hold for every base $b \leq i+1$:
\[
\begin{array}{llcl}
a. \ & |x_b(\vec{I}_n)|                  &>     & \ell_{i+1}(i+3)  \\
b.\  & D(x_b(\vec{I}_n),b)             &\leq &2/(i+2)   \\
c.\  & D_{2\ell_i}(x_b(\vec{I}_n),b)& >    & b^{-2\ell_i-1}.
 \end{array}
\]

 Finally, let $\fRef_i(\vec{I}) = \vec{I}_n$.
\end{definition}

The following lemmas show that for every positive integer $i$, the function~$\fRef_i$ is well defined and it is computable.

\begin{lemma}
\label{4.8}
There is always a suitable (i+1)-sequence $\vec{J}$ to be selected in the  recursive step of Definition~\ref{4.7}.
\end{lemma}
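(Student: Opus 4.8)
The plan is to show that when we partition $L$ into $\kj{i}$ dyadic subintervals $J_2$ and extend each to an $(i+1)$-sequence $\vec{J}$, a positive fraction of these $\vec{J}$ satisfy $D(u_b(\vec{J}),b)\le 1/(i+2)$ simultaneously for all bases $b\le i+1$, so at least one exists. First I would record the lengths involved: the block $u_b(\vec{J})$ appended to $x_b(\vec{I_j})$ when passing to $x_b(\vec{J})$ has length roughly $k_i\lceil\log(b)\rceil$ up to the distortion between base $2$ and base $b$ controlled by the $t$-sequence measure inequality $\measure{I_b}\ge\measure{I_{b'}}/(2^{b-b'}b!/b'!)$; in particular $|u_b(\vec{J})|\ge k_i$ for every $b\le i+1$, since $k_i$ digits in base $i+1$ are the coarsest case. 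The key point is that the choice of $k_i$ in Definition~\ref{4.6} is exactly engineered to invoke Lemma~\ref{2.7}: with $t=i+2$, $\epsilon=1/(i+2)$, $\delta=\delta_i/(i+1)$ (or a similar explicit choice), $k_i$ exceeds the threshold $k$ of Lemma~\ref{2.7}, so for each fixed base $b\le i+1$ the fraction of length-$k'$ blocks with simple discrepancy exceeding $1/(i+2)$ is less than $\delta_i/(i+1)$, for all $k'\ge k_i$.

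Next I would translate "fraction of bad blocks" into "fraction of bad dyadic subintervals $J_2$ of $L$." Each choice of $J_2$ among the $2^{\kj{i}}$ equal pieces of $L$ determines, via the $(i+1)$-sequence construction (Corollary~\ref{4.5}, applied uniformly), a block $u_b(\vec{J})$ in base $b$; but the map from $J_2$ to the base-$b$ extension is not a clean bijection onto all base-$b$ blocks of a fixed length, because of the $b$-adic rounding in Lemma~\ref{4.3} and because different $J_2$'s may be distributed among base-$b$ blocks with multiplicity. Here I would use the measure comparison: the $(i+1)$-sequence loses at most a factor $2(b+1)$ in measure at each refinement step from $I_b$ to $I_{b+1}$, so the collection of $J_2$'s whose $\vec{J}$ lands in a fixed "bad" base-$b$ block occupies a controlled fraction of $L$; summing over bad base-$b$ blocks, the fraction of bad $J_2$ for base $b$ is at most $\delta_i/(i+1)$ inflated by the fixed distortion constant $2^{b-2}b!/2!$, which is still below $1/(i+1)$ by the choice of $\delta_i=1/(2^{2i-2}(i+1)!^2)$. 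Taking a union bound over the $i$ bases $b=2,\dots,i+1$, the fraction of $J_2$ that are bad for some base is strictly less than $1$, so a good $\vec{J}$ exists, and the leftmost such $\vec{J}$ is the one selected.

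The main obstacle I expect is the bookkeeping in the second paragraph: making precise how a uniform partition of the dyadic interval $L$ into $2^{\kj{i}}$ pieces pushes forward to an approximately uniform distribution over base-$b$ blocks of the appropriate length, given that the $(i+1)$-sequence construction interleaves $b$-adic truncations for all bases $2,\dots,i+1$ at once. The honest way to handle this is to fix attention on a single base $b$ at a time, observe that $x_b(\vec{J})$ depends on $J_2$ only through which base-$b$ subinterval of $I_{j,i+1}$ (at the relevant scale) contains $J_2$, bound the number of dyadic pieces $J_2$ mapping into each such base-$b$ subinterval by the measure ratio, and thereby bound the number of "bad-for-$b$" pieces by $\delta_i/(i+1)$ times the total up to the explicit constant $2^{b-2}b!/2$. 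Everything else — verifying $|u_b(\vec{J})|\ge k_i$, checking $k_i$ meets the Lemma~\ref{2.7} hypothesis for $t=i+2$ and $\epsilon=1/(i+2)\le 1/t$, and the final union bound — is a routine calculation that the definitions of $\delta_i$ and $k_i$ were set up to make go through.
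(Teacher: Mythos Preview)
Your overall strategy---invoke Lemma~\ref{2.7} for each base $b\le i+1$ and then take a union bound---is the same as the paper's. The difference is in how the ``bad'' set is controlled, and here your route runs into exactly the obstacle you flag. You try to count how many of the $2^{\kj{i}}$ dyadic pieces $J_2$ are bad for a given base $b$, which requires understanding the map $J_2\mapsto u_b(\vec J)$. Your suggested resolution (``$x_b(\vec J)$ depends on $J_2$ only through which base-$b$ subinterval of $I_{j,i+1}$ contains $J_2$'') is not right: $x_b(\vec J)$ is determined by $J_b$, which is constructed \emph{inside} $J_2$ via the chain $J_2\supset J_3\supset\cdots\supset J_b$, not by a base-$b$ interval containing $J_2$. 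The map $J_2\mapsto J_b$ is injective, which helps, but the lengths $|u_b(\vec J)|$ may vary with $J_2$, so converting ``fraction of bad base-$b$ blocks'' into ``fraction of bad $J_2$'s, up to a fixed distortion factor'' still needs an argument you have not supplied.

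The paper sidesteps this bookkeeping by comparing Lebesgue measures rather than counting pieces. Set $\SC=\bigcup_{\vec J} J_{i+1}$ and $\NC=\bigcup_{\text{bad }\vec J} J_{i+1}$. The $t$-sequence inequalities give directly $\measure{\SC}\ge\delta_i\,\measure{I_{j,2}}$. For the other direction, if $\vec J$ is bad for base $b$ then $J_{i+1}\subseteq J_b$, and $J_b$ is precisely the $b$-adic interval of reals whose base-$b$ expansion begins with $x_b(\vec{I_j})\,u_b(\vec J)$; hence the bad-for-$b$ part of $\NC$ is contained in the set of reals in $I_{j,b}$ whose continuation block has discrepancy exceeding $1/(i+2)$. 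Lemma~\ref{2.7} (with $t=i+1$, $\epsilon=1/(i+2)$, $\delta=\delta_i/(i+1)$, $k=k_i$) bounds that set's measure by $\delta\,\measure{I_{j,b}}\le\delta\,\measure{I_{j,2}}$. Summing over $b\le i+1$ yields $\measure{\NC}<(i+1)\delta\,\measure{I_{j,2}}=\delta_i\,\measure{I_{j,2}}\le\measure{\SC}$, so $\SC\setminus \NC\neq\emptyset$ and any $\vec J$ with $J_{i+1}$ in this set is suitable. No analysis of the pushforward $J_2\mapsto u_b(\vec J)$ is needed at all.
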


\newcommand{\SC}[0]{S}
\newcommand{\NC}[0]{N}
\begin{proof}
Consider the recursive step $j+1$ of Definition~\ref{4.7}.
Let ${\SC}$ be the  union of the set of intervals $J_{i+1}$
over the $2^{\kj{i}}$ many $(i+1)$-sequences~$\vec{J}$. 
We have $\measure{L} \ge \measure{I_{j,i+1}}/4$
and, since $\vec{J}$ and $\vec{I_j}$ are $(i+1)$-sequences, we have
\[
\measure{J_{i+1}} \ge \frac{  \measure{J_2} }{2^{i-2} (i+1)!} \quad \text{ and }\quad
\measure{I_{j,i+1}} \ge \frac{  \measure{I_{j,2}} }{2^{i-2} (i+1)!}.
\]
Since the possibilities for $J_2$ form a partition of~$L$,
\[
\measure{\SC} \ge \frac{  \measure{L} }{2^{i-2} (i+1)!} \ge \frac{ \measure{I_{j,i+1}} }{2^i (i+1)!} \ge \frac{  \measure{I_{j,2}} }{2^{2i-2} (i+1)!^2} = \delta_i \measure{I_{j,2}}.
\]
Let us say that an $(i+1)$-sequence $\vec{J}$ is {\em not suitable} if for some  
base $b\leq i+1$,   
\[
D(u_b(\vec{J}),b) > 1/(i+2).
\]  
Let~${\NC}$ be the subset of $\SC$  defined as the union of the set of intervals $J_{i+1}$ 
which occur in $(i+1)$-sequences which are  {\em not suitable}.  
Each $\vec{J}$ considered at stage~$i+1$ 
is such that for every base $b\leq i+1$ each interval  $J_b$ is a subinterval of~$I_{j,b}$.
By definition, $|u_b(\vec{J})|>k_i$ for each $b$ and~$\vec{J}$.
By  Lemma~\ref{2.7} with $t=i+1$, $\epsilon = 1/(i+2)$, $\delta = \delta_i / (i+1)$
and $k=k_i$,
for each base $b\leq i+1$, 
the subset of $I_{j,b}$ consisting of reals with base $b$ expansions starting with
$x_b(\vec{I_j}) u_b(\vec{J})$ for which $D(u_b(\vec{J}),b)>1 / (i+2)$ has measure 
less than  $\delta \measure{I_{j,b}}$,
and hence, less than $\delta \measure{I_{j,2}}$.  
Therefore, 
\[
\displaystyle{\measure{\NC} < (i+1) \delta \measure{I_{j,2}} = \delta_i \measure{I_{j,2}} = \measure{\SC}.}
\]
This proves that ${\SC}$ is a proper superset of ${\NC}$, therefore,
there is a suitable $(i+1)$-sequence.
\end{proof}

\begin{lemma}\label{4.9}
The recursion in Definition~\ref{4.7} finishes for every input sequence $\vec{I}$.
\end{lemma}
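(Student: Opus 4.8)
The plan is to show that the recursive search of Definition~\ref{4.7} must reach a step $n$ at which conditions (a), (b) and (c) hold for every base $b\le i+1$ at once. By Lemma~\ref{4.8} every recursive step admits a suitable $(i+1)$-sequence, so the recursion is well defined: unless it has already stopped it produces an infinite sequence $\vec I_0,\vec I_1,\dots$, and it remains only to see that the three stopping conditions eventually hold. For $j\ge 1$ let $u_b(\vec I_j)$ be the block with $x_b(\vec I_{j-1})\,u_b(\vec I_j)=x_b(\vec I_j)$; this makes sense because at the $j$-th step $I_{j,b}\subseteq I_{j,2}\subseteq L\subseteq I_{j-1,i+1}\subseteq I_{j-1,b}$, where $L$ is the dyadic interval chosen in that step, so that $b^{|u_b(\vec I_j)|}=\measure{I_{j-1,b}}/\measure{I_{j,b}}$.

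First I would bound the chunk lengths. Since $I_{j,b}\subseteq I_{j,2}$, $\measure{I_{j,2}}=2^{-\kj{i}}\measure{L}$ and $L\subseteq I_{j-1,b}$, we get $\measure{I_{j-1,b}}/\measure{I_{j,b}}\ge \measure{L}/\measure{I_{j,2}}=2^{\kj{i}}$, so $|u_b(\vec I_j)|\ge \kj{i}/\log b\ge k_i$ for every $b\le i+1$ (using $\log b\le\ceil{\log(i+1)}$); a matching upper estimate, from the lower bounds on the measure ratios of $(i+1)$-sequences and the choice of $\ell_i$ in Definition~\ref{4.6}, gives $|u_b(\vec I_j)|\le\ell_i$. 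Hence $|x_b(\vec I_n)|\ge|x_b(\vec I_0)|+n\,k_i\to\infty$, so condition (a) holds for all large~$n$.

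For condition (b), write $x_b(\vec I_n)=x_b(\vec I_0)\,\prodd_{1\le j\le n}u_b(\vec I_j)$. By the selection rule $D(u_b(\vec I_j),b)\le1/(i+2)$ for $j\ge1$, and $D(x_b(\vec I_0),b)<1$, so the concatenation bound for simple discrepancy gives $D(x_b(\vec I_n),b)\le |x_b(\vec I_0)|/|x_b(\vec I_n)|+1/(i+2)$, which falls below $2/(i+2)$ once $|x_b(\vec I_n)|$ is large and then stays there; so (b) holds for all large~$n$.

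Condition (c) is where I expect the real work. Fix $b\le i+1$ and let $v$ be the block $(b-1)^{2\ell_i}$. A suitable chunk $u_b(\vec I_j)$ has $D(u_b(\vec I_j),b)\le1/(i+2)$, hence at most $(1/b+1/(i+2))|u_b(\vec I_j)|\le(1/b+1/(i+2))\ell_i$ occurrences of the digit $b-1$; in particular no chunk consists entirely of $b-1$'s. Therefore a maximal run of $b-1$'s contained in the body $\prodd_{j\ge1}u_b(\vec I_j)$ can meet at most two consecutive chunks (a chunk lying inside such a run would be all $b-1$'s), so it has length at most $2(1/b+1/(i+2))\ell_i\le (5/3)\ell_i<2\ell_i$. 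Consequently any run of length $2\ell_i$ in $x_b(\vec I_n)$ must reach into the fixed prefix $x_b(\vec I_0)$ and can extend past it by fewer than $\ell_i$ positions, so every occurrence of $v$ starts within the first $|x_b(\vec I_0)|$ positions and $\occ(x_b(\vec I_n),v)\le|x_b(\vec I_0)|$ for every~$n$. Since $|x_b(\vec I_n)|\to\infty$, the frequency $\occ(x_b(\vec I_n),v)/|x_b(\vec I_n)|$ tends to $0$, so for all large~$n$ it is below $b^{-2\ell_i}-b^{-2\ell_i-1}$, whence $D_{2\ell_i}(x_b(\vec I_n),b)\ge b^{-2\ell_i}-\occ(x_b(\vec I_n),v)/|x_b(\vec I_n)|> b^{-2\ell_i-1}$, which is (c). Finally, since there are only finitely many bases $b\le i+1$, taking $n$ past all the thresholds above makes (a), (b) and (c) hold for every such $b$, so the recursion stops at that step or earlier. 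The delicate point is precisely this analysis of (c): one must pin down a fixed length-$2\ell_i$ block whose occurrence count does not grow with~$n$, which works because suitable chunks are simultaneously short ($\le\ell_i$) and digit-balanced, and hence cannot carry a run of length $2\ell_i$ away from the prefix $x_b(\vec I_0)$.
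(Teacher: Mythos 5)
Your proof is correct and follows essentially the same route as the paper's: bound each chunk length by $\ell_i$, use the concatenation bound for (a) and (b), and for (c) exhibit a fixed block of length $2\ell_i$ whose occurrence count stays bounded because the discrepancy condition forbids any chunk from being constant. The only (cosmetic) difference is that you use the block $(b-1)^{2\ell_i}$ where the paper uses $0^{2\ell_i}$, and you spell out the two-chunk run argument that the paper leaves implicit.
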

\begin{proof}
Using Lemma~\ref{4.3} or Corollary~\ref{4.5} the needed
$b$-adic subintervals with the appropriate measure and $(i+1)$-sequences
can be found computably.
Lemma~\ref{4.8} ensures that a suitable  $\vec{J}$ can always be found in each recursive step.  
All the other tasks in the recursive step are clearly computable.
It remains to check that the ending conditions of the recursion  are eventually~met. \linebreak
Let $u_{b,j+1}$ be such
that 
\[x_b(\vec{I}_j) u_{b,j+1} = x_b(\vec{I}_{j+1})
\]
 and $v_b$ be such that
\[x_b(\vec{I_0}) = x_b(\vec{I}) v_b.
\]
Then, $1 \le |u_{b,j}|$ and
\begin{align*}
|u_{b,j}| =\ & |x_b(\vec{I}_j)| - |x_b(\vec{I}_{j-1})| 
 \\=\  & -\log_b \frac{ \measure{I_{j,b}} }{ \measure{I_{j-1,b}} } 
\\=\ & -\log_b \frac{ \measure{I_{j,b}} }{ \measure{I_{j,2}} }\ \frac{ \measure{I_{j,2}} }{ \measure{I_{j-1,i+1}} }\ \frac{ \measure{I_{j-1,i+1}} }{ \measure{I_{j-1,b}} }
\\\le\   & -\log_b \frac{1}{2^{b-3} b!}\ \frac{1}{4\ 2^{\kj{i}}}\ \frac{1}{2^{i+1-b} (i+1)!/b!} 
\\\le\ & 	-\log \frac{1}{2^{i-2} (i+1)!}\ \frac{1}{4\ 2^{\kj{i}}} 
\\\le\ & \kj{i} -\log \delta_i  
\\\le\ & \ell_i.
\end{align*}
The recursive step establishes $D(u_{b,j}, b) \le 1/(i+2)$, and for any~$k$,
$x_b(\vec{I_k}) = x_b(\vec{I}) v_b \prodd_{j \le k} u_{b,j}.$
Notice that, in each of the three  conditions $(a),(b)$ and~$(c)$, 
the right side of the inequality is fixed.
For condition~$(a)$,
$|x_b(\vec{I_n})| = |x_b(\vec{I}) v_b \prodd_{j \le n} u_{b,j}|$
is strictly increasing on $n$, so it is greater than the required lower bound
for sufficiently large $n$.
For~condition~$(b)$, observe that
\begin{align*}
D(x_b(\vec{I_n}), b) 
 =\ & D(x_b(\vec{I}) v_b \prodd_{j \le n} u_{b,j}, b) \\
\le\ & |x_b(\vec{I}) v_b| / |x_b(\vec{I}_n)| +
	D(\prodd_{j \le n} u_{b,j}, b) \\
\le\ & |x_b(\vec{I}) v_b| / |x_b(\vec{I}_n)| + 1/(i+2).
\end{align*}
In the right hand side, the first term approaches $0$ for large $n$, so the
entire expression is less than $2/(i+2)$ for sufficiently large $n$.
For condition~$(c)$, observe that the recursive step ensures that 
$u_{b,j}$ is never all zeros.  So, a sequence of $2\ell_i$ zeros does not
occur in $\prodd_{j \le n} u_{b,j}$.
By definition, 
\[
D_{2\ell_i}(x_b(\vec{I}) v_b \prodd_{j \le n} u_{b,j}, b) \ \ge \
\left|\frac{ \occ(x_b(\vec{I}) v_b \prodd_{j \le n} u_{b,j}, 0^{2\ell_i}) }{ |x_b(\vec{I}) v_b \prodd_{j \le n} u_{b,j}| } - \frac{1}{b^{2\ell_i}} \right|.
\]
Since $\occ(x_b(\vec{I}) v_b \prodd_{j \le n} u_{b,j}, 0^{2\ell_i})$
is bounded by a constant,
for sufficiently large $n$, the discrepancy
$D_{2\ell_i}(x_b(\vec{I}) v_b \prodd_{j \le n} u_{b,j}, b)$
is arbitrarily close to $b^{-2\ell_i}$.
\end{proof}

\begin{lemma}  \label{4.10}
Let $\vec{I} $
be an arbitrary $(p+1)$-sequence,
$i \ge 1$ be an integer and $\vec{R}$ be 
$\fRef_i(\vec{I})$.
For every base $b \le \min(i,p)+1$,
\begin{enumerate}
\item $R_2 \subseteq I_{p+1}$
\item $D(x_b(\vec{R}),b) \le 2/(i+2)$
\item $D_{2\ell_i}(x_b(\vec{R}), b) > b^{-2\ell_i-1}$
\item $|x_b(\vec{R})| > \ell_{i+1} (i+3)$
\item For each $\ell$ such that $|x_b(\vec{I})| \le \ell \le |x_b(\vec{R})|$, 
\end{enumerate}
\[
D(x_b(\vec{R}) \restriction \ell,b) \le D(x_b(\vec{I}), b) + \dd{p}/|x_b(\vec{I})| + 1/(i+2) + \ell_i/|x_b(\vec{I})|.
\]
\end{lemma}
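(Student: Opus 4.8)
The plan is to unwind the recursive construction in Definition~\ref{4.7} and track the quantities appearing in each of the five conclusions as we pass from the input sequence~$\vec{I}$ through the initial step~$0$ and then through the recursive steps $1,\dots,n$. Throughout, write $u_{b,j}$ for the block with $x_b(\vec{I}_j) = x_b(\vec{I}_{j-1}) u_{b,j}$ and $v_b$ for the block with $x_b(\vec{I}_0) = x_b(\vec{I}) v_b$, exactly as in the proof of Lemma~\ref{4.9}, so that $x_b(\vec{R}) = x_b(\vec{I}) v_b \prodd_{j \le n} u_{b,j}$ and $|v_b| \le \ell_i$ by the computation carried out there. The fact that $b$ ranges only over bases $\le \min(i,p)+1$ is what makes the $b$-adic intervals in both $\vec{I}$ and the $(i+1)$-sequences $\vec{I}_j$ simultaneously meaningful, so that all the $x_b$'s are genuine initial segments of a common real expansion.

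For items (1)--(4) the work is essentially bookkeeping against what is already built into Definition~\ref{4.7}. Item (1) holds because $I_{0,2}$ is chosen as a dyadic subinterval of $I_{p+1}$ and each later recursive step stays inside the previous $(i+1)$-sequence, so $R_2 = I_{n,2} \subseteq I_{0,2} \subseteq I_{p+1}$; I would state this and move on. Items (2), (3) and (4) are literally the ending conditions $(b)$, $(c)$ and $(a)$ of the recursion: the recursion terminates at a step $n$ at which all three hold for every base $b \le i+1$, hence in particular for every $b \le \min(i,p)+1$, and Lemma~\ref{4.9} guarantees such an $n$ exists. I would simply cite Definition~\ref{4.7} and Lemma~\ref{4.9} here.

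The substantive part is item (5), the uniform discrepancy bound along every prefix length $\ell$ between $|x_b(\vec{I})|$ and $|x_b(\vec{R})|$. The idea is to split $x_b(\vec{R}) \restriction \ell$ as a concatenation of three pieces: the block $x_b(\vec{I})$ itself (length $|x_b(\vec{I})|$), the suffix $v_b$ of the initial step (length $\le \ell_i$, or strictly $\le \dd{p}$ by a finer look at the initial-step measure estimate $\measure{I_{0,2}} \ge \measure{I_{p+1}}/4$ against $\measure{I_{p+1}} \ge \measure{I_{p+1,2}} \delta_p$-type bounds), and then a prefix of $\prodd_{j \le n} u_{b,j}$ that is cut off somewhere inside some block $u_{b,m}$. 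Applying the concatenation lemma (the unnamed Lemma after Lemma~\ref{2.4}) to these pieces, the contribution of $x_b(\vec{I})$ is at most $D(x_b(\vec{I}),b)$ weighted by its length over $\ell \ge |x_b(\vec{I})|$, hence at most $D(x_b(\vec{I}),b)$; the contribution of $v_b$ is at most $|v_b|/|x_b(\vec{I})| \le \dd{p}/|x_b(\vec{I})|$ since discrepancy is always $\le 1$; the full middle blocks $u_{b,j}$ each have discrepancy $\le 1/(i+2)$ by the recursive step, so their pooled contribution is $\le 1/(i+2)$; and the final partial block has length $< |u_{b,m}| \le \ell_i$ (again by the bound from Lemma~\ref{4.9}) and discrepancy $\le 1$, contributing at most $\ell_i/|x_b(\vec{I})|$. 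Summing these four bounds gives exactly the claimed inequality. The main obstacle is the careful handling of the truncated last block: one must verify that the concatenation lemma still applies when the final piece is an arbitrary prefix of $u_{b,m}$ (it does, since the lemma is stated for arbitrary blocks), and that the denominators are correctly lower-bounded by $|x_b(\vec{I})|$ rather than by the larger $\ell$ in the terms where we want the small constants $\dd{p}$ and $\ell_i$ in the numerator — using $\ell \ge |x_b(\vec{I})|$ uniformly is the right move, and I would also double-check that $|v_b| \le \dd{p}$ rather than merely $\le \ell_i$, revisiting the initial-step measure inequality if the sharper bound is needed for the stated form.
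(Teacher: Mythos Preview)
Your plan matches the paper's proof almost exactly: items (1)--(4) are dispatched by nesting and by the termination conditions, and item (5) is obtained by decomposing $x_b(\vec{R})\restriction\ell$ as $x_b(\vec{I})\,v_b\,(\prodd_{j\le m}u_{b,j})\,(u_{b,m+1}\restriction\ell')$ and applying the concatenation discrepancy lemma, bounding the full $u_{b,j}$'s by $1/(i+2)$ and the two stray pieces $v_b$ and $u_{b,m+1}\restriction\ell'$ by their lengths over $|x_b(\vec{I})|$.

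One correction: Lemma~\ref{4.9} does \emph{not} bound $|v_b|$; its length computation is only for the $u_{b,j}$'s. The bound $|v_b|\le\dd{p}$ is established freshly in the paper's proof of Lemma~\ref{4.10} by the measure-ratio chain
\[
|v_b| = -\log_b\frac{\measure{I_{0,b}}}{\measure{I_b}}
      = -\log_b\frac{\measure{I_{0,b}}}{\measure{I_{0,2}}}\cdot\frac{\measure{I_{0,2}}}{\measure{I_{p+1}}}\cdot\frac{\measure{I_{p+1}}}{\measure{I_b}}
      \le -\log\delta_p,
\]
using the $(i+1)$-sequence ratio for the first factor, $\measure{I_{0,2}}\ge\measure{I_{p+1}}/4$ for the second, and the $(p+1)$-sequence ratio for the third. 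You correctly flagged that $\dd{p}$ (not $\ell_i$) is the bound required by the stated inequality, so once you carry out this computation your proof is complete and identical in substance to the paper's.
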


\begin{proof}
As in the proof of Lemma~\ref{4.9}, let $u_{b,j+1}$ be such
that $x_b(\vec{I}_j) u_{b,j+1} = x_b(\vec{I}_{j+1})$ and  let $v_b$ be such that
$x_b(\vec{I_0}) = x_b(\vec{I}) v_b$.
Then, $1 \le |u_{b,j}| \le \ell_i$, $D(u_{b,j}, b) \le 1/(i+2)$, and for any $k$ it holds that
$x_b(\vec{I_k}) = x_b(\vec{I}) v_b \prodd_{j \le k} u_{b,j}.$

Fix a base  $b$. 
Point (1) follows by induction in the recursive steps in the definition of $\fRef_i(\vec{I})$,
since each subsequent interval is contained in the previous one.
Points (2), (3) and (4) follow from the termination condition in that recursion.
For point (5), use the above definition of $v_b$ and the parameter $\delta_p$
(see Definition \ref{4.6}).
\begin{align*}
|v_b| =\ & |x_b(\vec{I_0})| - |x_b(\vec{I})| \\
=\ & -\log_b \frac{ \measure{I_{0,b}} }{ \measure{I_b} } \\
=\ & -\log_b \frac{ \measure{I_{0,b}} }{\measure{I_{0,2}}} \frac{\measure{I_{0,2}}}{\measure{I_{p+1}}} \frac{\measure{I_{p+1}}}{ \measure{I_b} } \\
=\ & -\log_b \frac{ 1 }{ 2^{b-3} b! } \frac{1}{4} \frac{1}{ 2^{p+1-b} (p+1)!/b! } \\
=\ & -\log \frac{ 1 }{ 2^p (p+1)! } \\
\le\ & -\log \delta_p \\
\le\ & \dd{p}. 
\end{align*}
Then, for each $m$,
$D(\prodd_{j \le m} u_{b,j}, b) \le 1/(i+2)$ and
\[
D(x_b(\vec{I}) v_b \prodd_{j \le m} u_{b,j}, b)\ \le \ D(x_b(\vec{I}), b) + \dd{p}/|x_b(\vec{I})| + 1/(i+2).
\]
Finally, fix $\ell$ and let $m$ and $\ell'$ be such that
$(x_b(\vec{I}) v_b \prodd_{j \le m} u_{b,j}) (u_{b,m+1} \restriction \ell') = 
x_b(\vec{R}) \restriction \ell$. Then, 
\begin{align*}
D(x_b(\vec{R}) \restriction \ell, b) 
=\ & D((x_b(\vec{I}) v_b \prodd_{j \le m} u_{b,j}) (u_{b,m+1} \restriction \ell'), b) \\
\le\ & D(x_b(\vec{I}), b) + \dd{p}/|x_b(\vec{I})| + 1/(i+2) + |u_{b,m+1}|/|x_b(\vec{I})| \\
\le\ & D(x_b(\vec{I}), b) + \dd{p}/|x_b(\vec{I})| + 1/(i+2) + \ell_i/|x_b(\vec{I})|.
\end{align*}
\end{proof}

\begin{definition}\label{4.11}
 We define the function $\lRef$ that takes infinite sequences of positive integers 
to real numbers in the unit interval, and
\[
\lRef(f) \text{ is the unique element in } \bigcap_{j=1}^\infty (\vec{R}_j)_2,
\text{ where } 
\vec{R}_0 = ([0,1))  \text{ and } \vec{R}_{j+1} = \fRef_{f_{j+1}}(\vec{R}_j).
\]
\end{definition}

That is, $\lRef(f)$ is the real obtained by iterating applications of $\fRef_{i}$ 
where $i$ is determined by the positive integers in $f$. 
By point (1) of Lemma~\ref{4.10}, for each $j\geq 1$,
$\lRef(f)$ is inside every interval in every $(f_j+1)$-sequence $\vec{R}_j$, and
therefore, for each base $b \le f_j+1$, $x_b(\vec{R}_j)$ is a prefix of
$(\lRef(f))_b$.

\begin{lemma}\label{4.12}
Let $f$ be a sequence of positive integers such that the subsequence of $f$'s 
first occurrences  is an enumeration of $\N$ in increasing order and no positive integer
occurs infinitely often in~$f$.
Then, $\lRef(f)$ is an absolutely normal number.
\end{lemma}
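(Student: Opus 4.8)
The plan is to fix a base $b$, put $R=\lRef(f)$, and prove that $D((R)_b\restriction n,b)\to 0$ as $n\to\infty$. By the remarks of Section~2 — that absolute normality coincides with simple normality to every base, and that simple normality to base $b$ amounts to the vanishing of the simple discrepancy $D((R)_b\restriction n,b)$ — this yields the lemma. The first thing to record is that $f$ tends to infinity: since no positive integer occurs infinitely often in $f$, each set $\{j:f_j\le M\}$ is finite. Fix $J_b$ with $f_j\ge b-1$ for all $j\ge J_b$. Then for every $j>J_b$ we have $b\le\min(f_{j-1},f_j)+1$, so Lemma~\ref{4.10} applies at base $b$ to $\vec R_j=\fRef_{f_j}(\vec R_{j-1})$; moreover the chain $(\vec R_j)_b\subseteq(\vec R_j)_2\subseteq(\vec R_{j-1})_{f_{j-1}+1}\subseteq(\vec R_{j-1})_b$ (point~(1) of Lemma~\ref{4.10} together with the nesting inside $t$-sequences) shows that the $b$-adic intervals $(\vec R_j)_b$ for $j\ge J_b$ are nested. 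Consequently the blocks $x_b(\vec R_j)$, $j\ge J_b$, are nested prefixes of $(R)_b$ (using also the remark after Definition~\ref{4.11}), their lengths are non-decreasing in $j$, and by point~(4) of Lemma~\ref{4.10} they tend to infinity.

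For $n$ large, let $j$ be the least index with $|x_b(\vec R_j)|>n$; then $j$ is large (so Lemma~\ref{4.10} applies at steps $j-1$ and $j$, and $f_{j-1},f_j\to\infty$ as $n\to\infty$) and $|x_b(\vec R_{j-1})|\le n<|x_b(\vec R_j)|$. Applying point~(5) of Lemma~\ref{4.10} to $\vec R_j=\fRef_{f_j}(\vec R_{j-1})$ with $\ell=n$ gives
\[
D((R)_b\restriction n,b)\ \le\ D(x_b(\vec R_{j-1}),b)\;+\;\frac{\dd{f_{j-1}}}{|x_b(\vec R_{j-1})|}\;+\;\frac{1}{f_j+2}\;+\;\frac{\ell_{f_j}}{|x_b(\vec R_{j-1})|},
\]
and it remains to show each summand vanishes. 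The first is at most $2/(f_{j-1}+2)$ by point~(2) of Lemma~\ref{4.10} at step $j-1$. For the second, point~(4) gives $|x_b(\vec R_{j-1})|>\ell_{f_{j-1}+1}(f_{j-1}+3)\ge\dd{f_{j-1}}(f_{j-1}+3)$, the last inequality using $\ell_i\ge\dd{i}$ and the monotonicity in $i$ of $\ell_i$ and $\dd{i}$ (both immediate from Definition~\ref{4.6}), so the second summand is below $1/(f_{j-1}+3)$. The third clearly tends to $0$.

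The fourth summand is where the argument must do some work: $f$ may jump upward, so $f_j$ can be much larger than $f_{j-1}$, and the estimate $|x_b(\vec R_{j-1})|>\ell_{f_{j-1}+1}(f_{j-1}+3)$ does not control $\ell_{f_j}$. To handle it I would bring in the running maximum $r_k=\max\{f_1,\dots,f_k\}$. Because the first occurrences of $f$ enumerate $\N$ in increasing order, $r_k$ can increase by at most $1$ per step, so $f_j\le r_{j-1}+1$; moreover the value $r_{j-1}$ has its first occurrence at some step $j''\le j-1$, where $f_{j''}=r_{j-1}$, and for $n$ large this $j''$ is large enough that point~(4) of Lemma~\ref{4.10} applies at step $j''$ and yields $|x_b(\vec R_{j''})|>\ell_{r_{j-1}+1}(r_{j-1}+3)$. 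Since the lengths $|x_b(\vec R_k)|$ are non-decreasing for $k\ge J_b$ and $f_j\le r_{j-1}+1$,
\[
|x_b(\vec R_{j-1})|\ \ge\ |x_b(\vec R_{j''})|\ >\ \ell_{r_{j-1}+1}(r_{j-1}+3)\ \ge\ \ell_{f_j}(r_{j-1}+3),
\]
so the fourth summand is below $1/(r_{j-1}+3)$, which tends to $0$ because $r_{j-1}\ge f_{j-1}\to\infty$. Collecting the four estimates gives $D((R)_b\restriction n,b)\to 0$; hence $R$ is simply normal to base $b$, and since $b$ was arbitrary, $R$ is absolutely normal. I expect the fourth term to be the only genuine obstacle, and its treatment is precisely why the termination condition in Definition~\ref{4.7} demands length exceeding $\ell_{i+1}(i+3)$ rather than something smaller: whenever $f$ jumps to a new height, the base-$b$ prefix already produced is long relative to the parameters $\ell$ associated with that height.
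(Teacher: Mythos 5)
Your proof is correct and follows essentially the same route as the paper: straddle the prefix length $n$ between consecutive blocks $x_b(\vec R_{j-1})$ and $x_b(\vec R_j)$, apply point~(5) of Lemma~\ref{4.10} to $\vec R_j=\fRef_{f_j}(\vec R_{j-1})$, and kill the four terms using points~(2) and~(4) together with the hypothesis that first occurrences of $f$ enumerate $\N$ in increasing order. The only (harmless) deviation is in the last term: where the paper invokes the earlier position $j_2$ at which the value $f_{j+1}-1$ first occurs, you invoke the first occurrence of the running maximum $r_{j-1}\ge f_j-1$ and the monotonicity of $\ell_i$ --- the same idea in a slightly different dressing.
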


\begin{proof}
Fix a base $b$ and $\epsilon > 0$. We prove that
$D((\lRef(f))_b \restriction \ell,b) \le \epsilon$ for each sufficiently large
$\ell$.
Let $j_0$ be large enough such that $f_j > \max(b,\ceil{8/\epsilon})$ for every
$j \ge j_0$. Consider $\ell > |x_b(\vec{R}_{j_0})|$, and noticing that
$(|x_b(\vec{R}_j)|)_{j \in \N}$ is an increasing sequence, let $j$ be such that
$|x_b(\vec{R}_j)| \le \ell < |x_b(\vec{R}_{j+1})|$.
Observe that
$(\lRef(f))_b \restriction \ell = x_b(\vec{R}_{j+1}) \restriction \ell$.
Now note that $1/(f_j+2) \le \epsilon/8$ and apply point~(2) of Lemma~\ref{4.10}
to $\vec{R}_j = \fRef_{f_j}(\vec{R}_{j-1})$ to conclude that
\[
D(x_b(\vec{R}_j), b) \le 2/(f_j+2) \le \epsilon/4.
\]
By hypothesis, $f_j,f_{j+1} > b > 1$, so
let $j_1 < j$ be such that
$f_{j_1} = f_j-1$ and $j_2 < j+1$ be such that $f_{j_2} = f_{j+1}-1$.
By~point (4) of Lemma~\ref{4.10}, 
\[
|x_b(\vec{R}_j)| \ge |x_b(\vec{R}_{j_1})| > \ell_{f_j}(f_j+2) > \dd{f_j}(f_j+2),
\]
then,
\[
\dd{f_j}/|x_b(\vec{R}_j)| < 1/(f_j+2) \le \epsilon/8.
\] 
Similarly,
\[
|x_b(\vec{R}_j)| \ge |x_b(\vec{R}_{j_2})| > \ell_{f_{j+1}}(f_{j+1}+2),
\]
then,
\[
\ell_{f_{j+1}}/|x_b(\vec{R}_j)| < 1/(f_{j+1}+2) \le \epsilon/8.
\]
Now consider Lemma~\ref{4.10} again, but applied to
$\vec{R}_{j+1} = \fRef_{f_{j+1}}(\vec{R}_j)$. By point (5),
\[
D(x_b(\vec{R}_{j+1}) \restriction \ell, b) \le D(x_b(\vec{R}_j), b) \ +\  \dd{f_j}/|x_b(\vec{R}_j)| \ +\  1/(f_j+2) \ +\  \ell_{f_{j+1}}/|x_b(\vec{R}_j)|.
\]
By the bounds established above, each term on the right part of the inequality
is at most~$\epsilon/4$. So,
\[
D((\lRef(f))_b \restriction \ell, b) = D(x_b(\vec{R}_{j+1}) \restriction \ell,b) \le \epsilon.
\]
By the choice of $j_0,j,j_1,j_2$ the sequences
$\vec{R}_{j_0}, \vec{R}_j, \vec{R}_{j_1}, \vec{R}_{j_2}$ contain a $b$-adic
interval, hence the function $x_b$ is defined on them.
\end{proof}

\begin{lemma}\label{4.13}
Let $f$ be a sequence of positive integers such that all but finitely many occur infinitely often in $f$.
Then, $\lRef(f)$ is absolutely abnormal.
\end{lemma}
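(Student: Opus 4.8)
The plan is to show that $\lRef(f)$ fails to be normal --- in fact fails to be simply normal --- to every sufficiently large base, and then invoke Lemma~\ref{2.2} to conclude absolute abnormality. The key structural fact to exploit is condition~$(c)$ in the termination clause of Definition~\ref{4.7}, equivalently point~(3) of Lemma~\ref{4.10}: whenever $\vec R_{j+1}=\fRef_i(\vec R_j)$ with $i=f_{j+1}$, the block $x_b(\vec R_{j+1})$ has block discrepancy $D_{2\ell_i}(x_b(\vec R_{j+1}),b) > b^{-2\ell_i-1}$ for every base $b\le i+1$. So the idea is: pick a base $b$; by hypothesis there is $i\ge b-1$ with $i$ occurring infinitely often in $f$, so there are infinitely many indices $j$ with $f_{j+1}=i$, and for each such $j$ we get a prefix $x_b(\vec R_{j+1})$ of $(\lRef(f))_b$ whose $D_{2\ell_i}$-discrepancy is bounded below by the fixed positive constant $b^{-2\ell_i-1}$. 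Since $(\lRef(f))_b\restriction |x_b(\vec R_{j+1})| = x_b(\vec R_{j+1})$ and these prefix lengths tend to infinity, $D_{2\ell_i}((\lRef(f))_b\restriction n, b)$ does not tend to $0$ along this subsequence of $n$. By Lemma~\ref{2.4} this means $\lRef(f)$ is not normal to base $b$.

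First I would make precise that all but finitely many integers occurring infinitely often gives, for every base $b$, some single integer $i\ge \max(b-1,1)$ that occurs infinitely often in $f$; this is where the hypothesis is used, and it is exactly the dichotomy output by the first reduction in the ``false'' case (Lemma~\ref{4.2}). Then I would fix that $i$ and the associated $2\ell_i$, and for each of the infinitely many $j$ with $f_{j+1}=i$ note that $b\le i+1\le \min(f_{j+1},\cdot)+1$ so point~(3) of Lemma~\ref{4.10} applies to $\vec R_{j+1}=\fRef_{f_{j+1}}(\vec R_j)$; one must also check $b\le f_j+1$ for infinitely many such $j$ so that $x_b$ is defined on $\vec R_j$ as well, but since cofinitely many integers occur infinitely often, $f_j\ge b-1$ for all large $j$, so this is automatic. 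Having fixed $\ell^\ast=2\ell_i$, the nonvanishing of $D_{\ell^\ast}((\lRef(f))_b\restriction n,b)$ along $n=|x_b(\vec R_{j+1})|\to\infty$ contradicts Lemma~\ref{2.4}, giving non-normality to base $b$. Since $b$ was arbitrary, $\lRef(f)$ is normal to no base, which is absolute abnormality by definition (Definition~\ref{2.1}(4)); alternatively, non-simple-normality to every base $b^{\ell^\ast}$ plus Lemma~\ref{2.2} also closes it.

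The main obstacle --- really the only subtlety --- is bookkeeping the quantifier order: the witnessing block length $\ell^\ast=2\ell_i$ must be a single fixed value as $n\to\infty$, which is why it is essential that one fixed $i$ recurs infinitely often (so that $\ell_i$, and hence $\ell^\ast$, does not drift with $j$). If instead one only knew each base $b$ is handled by a different, growing $i$ at each stage, the lower bounds $b^{-2\ell_i-1}$ would shrink to $0$ and the argument would collapse; the hypothesis of the lemma is precisely what rules this out. A secondary point to state carefully is that $(\lRef(f))_b\restriction |x_b(\vec R_{j+1})|$ genuinely equals $x_b(\vec R_{j+1})$: this is the remark following Definition~\ref{4.11}, valid because $b\le f_{j+1}+1$ and $\lRef(f)$ lies inside every interval of the $(f_{j+1}+1)$-sequence $\vec R_{j+1}$. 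With these three observations in place the proof is short.
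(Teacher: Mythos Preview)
Your approach is essentially the paper's: use termination condition~(c) of Definition~\ref{4.7} (equivalently point~(3) of Lemma~\ref{4.10}) to exhibit, for a fixed block length $2\ell_i$, infinitely many prefixes with discrepancy bounded away from zero, then invoke Lemma~\ref{2.4}. The paper phrases it as: pick $i$ occurring infinitely often in $f$, conclude non-normality to base $i+1$, do this for cofinitely many bases, then apply Lemma~\ref{2.2}. You instead fix the target base $b$ first and then choose $i\ge b-1$; this is the same argument with the quantifiers rearranged.

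One small slip to flag: your claim that ``$f_j\ge b-1$ for all large $j$'' does \emph{not} follow from the hypothesis of the lemma. Cofinitely many integers occurring infinitely often says nothing about small integers; for instance $1$ may also occur infinitely often, so $f_j=1$ for arbitrarily large $j$. Fortunately this concern is moot: termination condition~(c) of Definition~\ref{4.7}, which you already cite, gives $D_{2\ell_i}(x_b(\vec R_{j+1}),b)>b^{-2\ell_i-1}$ for \emph{every} $b\le i+1$, independently of the size $p+1$ of the input sequence $\vec R_j$. (The constraint $b\le\min(i,p)+1$ in Lemma~\ref{4.10} is only needed so that $x_b(\vec I)$ makes sense in point~(5).) So you can simply drop the parenthetical about $f_j$ and appeal directly to condition~(c); the rest of your argument stands.
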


\begin{proof}
Fix  a base $b$ such that $b$ appears infinitely often in~$f$.
By the conditions imposed on~$f$, $(\lRef(f))_{b+1}$ has infinitely many prefixes of
the form $x_{b+1}(\fRef_{b}(\vec{I}))$ for
some $\vec{I}$.
By point~(3) of Lemma~\ref{4.10},
\[
D_{2\ell_{b}}(x_{b+1}(\fRef_{b}(\vec{I})),b+1) > (b+1)^{-2\ell_{b}-1}.
\]
Hence, for  infinitely many prefixes of  $(\lRef(f))_{b+1}$
their discrepancy to blocks of length $2\ell_{b}$ in base $b+1$
 is bounded away from $0$.
Then, by Lemma~\ref{2.4}, $\lRef(f)$ is not normal to base~$b+1$. 
Since all but finitely many bases can be chosen as $b+1$, $\lRef(f)$ is not normal
to all but finitely many bases. By Lemma~\ref{2.2} it is absolutely abnormal.
\end{proof}

We are ready to define the second reduction.

\begin{definition}\label{def:second}
The \emph{second reduction}  maps the index for a computable  infinite sequence of integers $f$
to the  index  for  the  infinite binary sequence $(\lRef(f))_2$.
\end{definition}

Since $\lRef(f)$ is uniformly computable from the input $f$, the second reduction is computable.

\subsection{Proof of the main theorem and corollaries}

\begin{proof}[Proof of Theorem~\ref{3.2}]
The needed reduction is the composition of the first reduction, given in Definition~\ref{2.1}, 
and the second reduction, given in Definition~\ref{def:second}. 
Apply Lemma~\ref{4.2}  for the first reduction
and Lemma~\ref{4.12} for the second reduction to obtain the first implication 
in Theorem~\ref{3.2}.
Apply Lemma~\ref{4.2}  for the first reduction
and  Lemma~\ref{4.13} for the second reduction to obtain the second implication.
\end{proof}

\begin{proof}[Proof of Corollary~\ref{3.3}]
Lemma~\ref{3.1} states that the corresponding sets are in the $\pt$ and $\mathbf \pt$ classes.
The hardness result  in the effective case is immediate from Theorem~\ref{3.2} by relativization.   We have the reduction from a $\pt$ sentence in first order arithmetic to an appropriate index for a computable real number.  By relativization, we obtain a reduction from a $\pt$ statement about a real number $X$ to an appropriate index of a real number which is computable from $X$.

For the general case, recall that to  prove hardness of subsets of reals  at levels in
the Borel hierarchy it is sufficient to consider subsets of Baire space $\N^\N$, because there is a
continuous function from the real numbers to $\N^\N$ that preserves  $\bPi^0_3$ definability.  
Baire space admits a syntactic representation of the levels in the Borel hierarchy in arithmetical terms, 
namely a subset of $\N^\N$ can be defined by a $\pt$ formula with a fixed parameter $P\in\N^\N$.
The analysis  given for the effective case, but now relativized to $X$ and $P$, applies.
\end{proof}

\begin{proof}[Proof of Corollary~\ref{3.4}]
Observe that  normality in all bases implies normality in each base.
And  absolute abnormality is lack of  normality in every base.
Thus, the same reductions used in the proof of  Corollary~\ref{3.3} also prove the 
completeness results for just one fixed base.
\end{proof}
\bigskip
\bigskip

\noindent {\bf Acknowledgments.} This research received support from  CONICET and  Agencia Nacional de Promoci\'{o}n Cient\'{i}fica y Tecnol\'{o}gica, Argentina,  from the National Science Foundation, USA, under Grant No. DMS-1001551, 
and from the Simons Foundation.
V. Becher and P. A. Heiber are members of the Laboratoire International Associ\'e INFINIS, 
\mbox{CONICET}/Universidad de Buenos Aires – CNRS/Universit\'e Paris Diderot. 
This work was done while the authors participated in the Buenos Aires Semester in Computability,
Complexity and Randomness,~2013.

\bibliography{pi3}

\end{document}